\newtheorem{theorem}{Theorem}
\newtheorem{acknowledgement}[theorem]{Acknowledgement}
\newtheorem{example}[theorem]{Example}
\newtheorem{proposition}[theorem]{Proposition}
\newtheorem{remark}[theorem]{Remark}
\newenvironment{proof}[1][Proof]{\noindent\textbf{#1.} }{\ \rule{0.5em}{0.5em}}
\begin{document}

\title{Source reconstruction using a bilevel optimisation method with a
smooth weighted distance function}
\author{ \and Niklas Br\"{a}nnstr\"{o}m$^{1}$ and Leif \AA\ Persson$^{1,2}$ 
\\
\\
$^{1}$Swedish Defence Research Agency, FOI,\\
SE-901 82 Ume\aa , Sweden\\
\\
$^{2}$Ume\aa\ University,\\
Department of Mathematics\\
and Mathematical Statistics,\\
SE-901 87 Ume\aa , Sweden}
\maketitle
\date{}

\begin{abstract}
We consider a bilevel optimatisation method for inverse linear atmospheric
dispersion problems where both linear and non-linear model parameters are to
be determined. We propose that a smooth weighted Mahalanobis distance
function is used and derive sufficient conditions for when the follower
problem has local strict convexity. A few toy-models are presented where
local strict convexity and ill-posedness of the inverse problem are
explored, indeed the smooth distance function is compared and contrasted to
linear and piecewise linear ones. The bilevel optimisation method is then
applied to sensor data collected in wind tunnel experiments of a neutral gas
release in urban environments (MODITIC).
\end{abstract}

\section{Introduction}

An inverse atmospheric dispersion problem is stated as follows: given the
topography, the meteorological conditions and a set of detector readings
determine where and when the hazardous substance was released and in which
quantities. The problem is known as a source reconstruction problem, and it
is easy to state but harder to solve due to being, like most inverse
problems, ill-posed \cite{Tikhonov1963},\cite{Enting2002}. Spurred by
various applications including the locating of industrial plants \cite%
{Marchuk1986}, determining the amount of radioactive nuclides released from
Chernobyl \cite{GHL} and Fukushima \cite{StohlEtAl}, pin-pointing nuclear
tests \cite{RingbomEtAl}, estimation of material released from volcanoes 
\cite{TheysEtAl},\cite{StohlEtAl2010},\cite{GrahnEtAl} a number of different
methods for addressing the inverse problem have been suggested. Even though
the main difference perhaps lies in the interpretation of the results these
methods are usually divided into two main categories: the probabilistic
approach with its Bayesian methods and the deterministic approach with its
optimisation methods. In the Bayesian setting a likelihood function is
calculated and weighted with any a priori information that one has at hand
to yield a posterior probability density function, which is then sampled to
yield an estimate of the sought source term (see e.g. \cite{Stuart2010} for
an introduction to general Bayesian inverse problems). In the deterministic
setting with optimisation methods a norm is devised under which the sensor
response of candidate sources is compared with the given sensor readings.
The candidate source that best fits the given sensor readings (minimizes the
distances under the chosen norm) is then deemed the solution to the inverse
problem.

For linear inverse atmospheric dispersion problems these methods come in
many different flavours and have often been devised with a given application
in mind: usually there are - a priori imposed - restrictions on the source
characteristics, e.g. the method may assume that the source is well
localised (located at a single point in space) and that the release was
instantaneous. For example Yee and coauthors have written a series of papers
adapting the Bayesian method to inverse dispersion problems of increasing
complexity \cite{KYL2007},\cite{Yee2007},\cite{YF2010},\cite{Yee2012} and 
\cite{Yee2012B}.

To use an optimisation method the inverse dispersion problem has to be cast
in a manner where the distance (under a chosen norm) function between model
sensor data and the given sensor readings can be minimized. Usually a least
squares solution is sought, and in \cite{BP2015} conditions under which the
least square problem is well defined is presented. Much of the literature
focuses on the problem where it is a-priori assumed that there is only a
single source, see e.g. \cite{RL1998},\cite{THG2007},\cite{AYH2007} and \cite%
{ISS2012}. There are however exceptions, for example in \cite{SSI2012} the
renormalisation method (least square method under the renormalisation norm)
presented in \cite{ISS2012} is generalised to cover an unknown number of
point sources, and in \cite{Bocquet2005} the space-time has been discretised
and an optimal source term is constructed by forming a union of (space-time)
grid sized point sources.

In this paper we make a contribution to the literature on optimisation
methods by applying a bilevel optimisation method, see \cite{Bard1998}, to a
linear inverse dispersion problem. A bilevel optimisation method splits the
optimisation problem in two: into a leader (upper level) problem and a
follower (lower level) problem and they are solved concurrently rather than
simultaneously. We consider dispersion problems where the source is a single
point source emitting at a constant rate. This problem is well suited to a
bilevel optimisation method where the follower problem concerns solving for
the emission rate and the leader problem pinpointing the location of the
source. For the bilevel optimisation method to work properly the follower
problem is required to have minima, ideally a strict minimum. We therefore
study local strict convexity of the follower problem, see Theorem \ref%
{thm:local_convexity} for sufficient conditions. We then explore the concept
of local strict convexity and its connection to ill-possedness of inverse
problems through a few toy-model examples. Following this the paper is
rounded off with the bilevel optimisation method being applied to dispersion
data from a series of wind tunnel experiments of urban environments of
varying complexity. The wind tunnel data was collected as part of the
European Defence Agency category B project MODITIC. In all cases the
boundary layer is neutrally stable, and we only consider cases where the
released gas is neutrally buoyant making the dispersion problem linear.

\section{Bilevel optimization problems}

A bilevel optimization problem is a constrained optimization problem where
the constraints also includes an optimization problem. The problem is
divided into an upper level or leader problem, with decision variables $%
\boldsymbol{x}\in X\subseteq \mathbb{R}^{n}$, and a lower level or follower
problem, with decision variables $\boldsymbol{y}\in Y\subseteq \mathbb{R}%
^{m} $. Here $X$ and $Y$ may be restricted to integers or nonnegative
values. We follow the notation of \cite{Bard1998}, p. 6. The leader problem
has the form 
\begin{eqnarray*}
V &=&\min_{\boldsymbol{x}\in X}F\left( \boldsymbol{x},\boldsymbol{y}\left( 
\boldsymbol{x}\right) \right) \\
\boldsymbol{G}\left( \boldsymbol{x},\boldsymbol{y}\left( \boldsymbol{x}%
\right) \right) &\leq &\boldsymbol{0}
\end{eqnarray*}%
where $F$ and $\boldsymbol{G}$, respectively, are the leader objective
function and constraint function, and $\boldsymbol{y}\left( \boldsymbol{x}%
\right) $ is an optimal solution to the follower problem%
\begin{eqnarray*}
v\left( \boldsymbol{x}\right) &=&\min_{\boldsymbol{y}\in Y}f\left( 
\boldsymbol{x},\boldsymbol{y}\right) \\
\boldsymbol{g}\left( \boldsymbol{x},\boldsymbol{y}\right) &\leq &\boldsymbol{%
0}\text{.}
\end{eqnarray*}%
An ambiguity occurs if the follower problem has several optimal solutions,
i.e., $\boldsymbol{y}\left( x\right) $ is set--valued. Then the follower is
indifferent towards these points, but the leader objective may be different
for different points in $\boldsymbol{y}\left( x\right) $, and there is no
way for the leader to direct the follower to the upper level optimal point.
Therefore, there may be no optimal solution to the bilevel program although
all functions are continuous and $X,Y$ are compact, cf. \cite{Bard1998}, p.
11. In our problem the sets $X,Y$ will be positive orthants, the follower
objective function $f\left( \boldsymbol{x},\boldsymbol{y}\right) $ will be a
Mahalanobis distance function measuring the discrepancy between model data
and measurements. The leader objective function $F$ will have the form%
\begin{equation*}
F\left( \boldsymbol{x},\boldsymbol{y}\right) =\exp \left( -\lambda \sum
y_{i}\right) +f\left( \boldsymbol{x},\boldsymbol{y}\right)
\end{equation*}%
hence minimizing the least square function, but penalizing large values of $%
\boldsymbol{y}$, which will act as a regularization of the problem ($\lambda
>0$ is a regularization parameter).

\section{The follower problem}

In the setting we are considering we have a priori assumed that the source
is a point source releasing a neutrally buoyant substance at a constant
rate. Under these assumptions the source location $\boldsymbol{x}\in \mathbb{%
R}^{d}$ is a\ nonlinear model parameter while the emission rate $\boldsymbol{%
y}\in \mathbb{R}^{1}$ is a linear model parameter. In general, our model
formulation allows a linear combination of basic sources, with a linear
positive weight vector $y\in \mathbb{R}^{n}$. To solve the inverse problem
we need a source-sensor relationship. Since the problem is linear the
source-sensor relationship is given a matrix relationship, which for
computational efficiency \cite{Marchuk1986} is expressed through the adjoint
formulation of the problem, thus $A:\mathbb{R}^{d}\rightarrow \mathbb{R}%
_{+}^{m\times n}$ is a matrix function with nonnegative elements (no sinks
are considered) and the adjoint model data $\boldsymbol{\mu }=\boldsymbol{%
\mu }\left( \boldsymbol{x},\boldsymbol{y}\right) =A\left( \boldsymbol{x}%
\right) \boldsymbol{y}\in \mathbb{R}^{m}$. The measured data $\boldsymbol{z}%
\in \mathbb{R}^{m}$ is the sensor response.\newline
We regard $\boldsymbol{z}$ as a random vector, and we assume that the
adjoint model data $\boldsymbol{\mu }$ represent the mean of $\boldsymbol{z}$%
. We also assume that the components $z_{i}$ of $\boldsymbol{z}$ are
statistically independent and that the variance of $z_{i}$ is%
\begin{equation*}
var\left( z_{i}\right) =\sigma ^{2}\left( \mu _{i}\right)
\end{equation*}%
where $\sigma :\mathbb{R\rightarrow R}_{+}$ is a given function. We let the
follower objective function be the \emph{Mahalanobis distance} between $%
\boldsymbol{z}$ and $\boldsymbol{\mu }$, viz., 
\begin{equation}
f\left( \boldsymbol{x},\boldsymbol{y}\right) =\sum_{i=1}^{m}\frac{\left(
z_{i}-\mu _{i}\right) ^{2}}{\sigma ^{2}\left( \mu _{i}\right) }\text{.}
\label{eqn:f}
\end{equation}%
In particular, we want to be able to choose a scale invariant distance
function, giving equal emphasis to all $\mu _{i}$, regardless of their size.

\subsection{Local convexity of the follower problem}

If the follower problem is strictly convex, the follower problem has a
unique optimal solution $\boldsymbol{y}\left( \boldsymbol{x}\right) $ for
each $\boldsymbol{x}\in X$, and by mild assumptions on $f\left( \boldsymbol{x%
},\boldsymbol{y}\right) $, an envelope theorem holds (e.g., \cite%
{MilgromSegal2002}, Theorem 2, p. 586), which implies that the optimal value
function%
\begin{equation*}
V\left( \boldsymbol{x}\right) =f\left( \boldsymbol{x},\boldsymbol{y}\left( 
\boldsymbol{x}\right) \right) =\inf_{\boldsymbol{y\in Y}}\;f\left( 
\boldsymbol{x},\boldsymbol{y}\right)
\end{equation*}%
is continuous. Therefore, the leader problem 
\begin{equation*}
\inf_{\boldsymbol{x\in X}}F\left( \boldsymbol{x},\boldsymbol{y}\left( 
\boldsymbol{x}\right) \right)
\end{equation*}%
has a solution since%
\begin{equation*}
F\left( \boldsymbol{x},\boldsymbol{y}\left( \boldsymbol{x}\right) \right)
=\exp \left( -\lambda \sum_{i}y_{i}\left( \boldsymbol{x}\right) \right)
V\left( \boldsymbol{x}\right) \text{. }
\end{equation*}%
However, in our setting the only situation when the follower problem is
guaranteed to be strictly convex is when $\sigma \left( \mu \right) $ is
constant, i.e., the classical least square method. However, we may derive
conditions for \emph{local convexity}, as the following theorem shows.

\begin{theorem}
\label{thm:local_convexity}Assume that 
\begin{equation}
f\left( y_{1},...,y_{n}\right) =\sum_{i=1}^{m}\frac{r_{i}^{2}}{\sigma
^{2}\left( \mu _{i}\right) }  \label{def:f}
\end{equation}%
where%
\begin{eqnarray*}
\mu _{i} &=&\sum_{j}a_{ij}y_{j} \\
r_{i} &=&z_{i}-\mu _{i}\text{. }
\end{eqnarray*}%
and $A=\left( a_{ij}\right) \in \mathbb{R}^{m\times n}$. Consider a fixed $%
\boldsymbol{y}$ and suppose that $\sigma $ is positive and twice
continuously differentiable in a neigbourhood of $\mu $.Then%
\begin{eqnarray*}
\frac{\partial f}{\partial y_{k}} &=&-2\sum_{i}a_{ik}\rho \left( \mu
_{i},r_{i}\right) \frac{r_{i}^{2}}{\sigma ^{2}\left( \mu _{i}\right) } \\
\frac{\partial ^{2}f}{\partial y_{k}\partial y_{l}} &=&2\sum_{i}a_{ik}a_{il}%
\eta \left( \mu _{i},r_{i}\right) \frac{r_{i}^{2}}{\sigma ^{2}\left( \mu
_{i}\right) }
\end{eqnarray*}%
where%
\begin{eqnarray*}
\rho \left( \mu ,r\right) &=&\frac{1}{r}+\frac{\sigma ^{\prime }\left( \mu
\right) }{\sigma \left( \mu \right) } \\
\eta \left( \mu ,r\right) &=&\left( \frac{1}{r}+2\frac{\sigma ^{\prime
}\left( \mu \right) }{\sigma \left( \mu \right) }\right) ^{2}-\left( \left( 
\frac{\sigma ^{\prime }\left( \mu \right) }{\sigma \left( \mu \right) }%
\right) ^{2}+\frac{\sigma ^{\prime \prime }\left( \mu \right) }{\sigma
\left( \mu \right) }\right) \text{.}
\end{eqnarray*}%
Moreover, if $A$ has full rank then

\begin{enumerate}
\item If $\eta \left( \mu _{i},r_{i}\right) >0$ for all $i$ then $f$ is
strictly convex in a neighborhood of $\boldsymbol{y}$.

\item If $\eta \left( \mu _{i},r_{i}\right) <0$ for all $i$ then $f$ is
strictly concave in a neigborhood of $\boldsymbol{y}$.
\end{enumerate}
\end{theorem}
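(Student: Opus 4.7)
My plan is threefold: apply the chain rule to obtain the gradient formula, differentiate once more to get the Hessian and verify that the scalar factor simplifies to $\eta(\mu_i,r_i)$, and then read off the convexity conclusion from a factorisation $\nabla^2 f = 2A^{T} D A$. Throughout I use that $\sigma$ is $C^2$ and positive near $\mu$, and implicitly that $r_i\neq 0$ (built into the very definition of $\rho$ and $\eta$).

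The first derivative is immediate from $\partial \mu_i/\partial y_k = a_{ik}$ and $\partial r_i/\partial y_k = -a_{ik}$: taking logarithmic derivatives of each summand $r_i^2/\sigma^2(\mu_i)$ gives $\partial_k \log(r_i^2/\sigma^2(\mu_i)) = -2a_{ik}(1/r_i + \sigma'(\mu_i)/\sigma(\mu_i)) = -2a_{ik}\rho(\mu_i,r_i)$, which multiplied back by $r_i^2/\sigma^2(\mu_i)$ yields the stated expression for $\partial f/\partial y_k$. For the Hessian I differentiate $\partial f/\partial y_k = -2\sum_i a_{ik}\rho_i r_i^2/\sigma^2(\mu_i)$ (writing $\rho_i$ for $\rho(\mu_i,r_i)$) by the product rule. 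The factor $r_i^2/\sigma^2(\mu_i)$ reproduces $-2a_{il}\rho_i r_i^2/\sigma^2(\mu_i)$, while $\partial_l \rho_i = a_{il}\bigl(1/r_i^2 + \sigma''(\mu_i)/\sigma(\mu_i) - (\sigma'(\mu_i)/\sigma(\mu_i))^2\bigr)$. Combining, each summand becomes $a_{ik}a_{il}\,r_i^2/\sigma^2(\mu_i)$ multiplied by $2\rho_i^2 - 1/r_i^2 - \sigma''(\mu_i)/\sigma(\mu_i) + (\sigma'(\mu_i)/\sigma(\mu_i))^2$. Expanding $2\rho_i^2 = 2(1/r_i + \sigma'/\sigma)^2$ and collecting terms reduces this to $(1/r_i + 2\sigma'(\mu_i)/\sigma(\mu_i))^2 - \bigl((\sigma'(\mu_i)/\sigma(\mu_i))^2 + \sigma''(\mu_i)/\sigma(\mu_i)\bigr) = \eta(\mu_i,r_i)$. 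This algebraic matching is the main bookkeeping hurdle; it is mechanical once every term is laid out, but has to be done carefully because of the many cross-terms produced by the quadratic $\rho^2$.

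For the convexity statements, the Hessian has the form $\nabla^2 f(\boldsymbol{y}) = 2 A^{T} D A$, where $D$ is diagonal with entries $D_{ii} = \eta(\mu_i,r_i)\, r_i^2/\sigma^2(\mu_i)$. When $\eta(\mu_i,r_i)>0$ for every $i$, each $D_{ii}$ is strictly positive. Combined with $A$ having full column rank, so that $A\boldsymbol{v}\neq \boldsymbol{0}$ for every $\boldsymbol{v}\neq\boldsymbol{0}$, this gives $\boldsymbol{v}^{T}A^{T}D A\boldsymbol{v} = (A\boldsymbol{v})^{T} D (A\boldsymbol{v}) > 0$, so $\nabla^2 f(\boldsymbol{y})$ is positive definite. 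Since $\sigma\in C^2$ near $\mu$, the map $\boldsymbol{y}'\mapsto \nabla^2 f(\boldsymbol{y}')$ is continuous, so positive definiteness persists in a neighbourhood of $\boldsymbol{y}$, yielding strict convexity there. The concave case is handled identically with every inequality reversed.
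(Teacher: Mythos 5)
Your proof is correct and follows essentially the same route as the paper: the derivative formulas come from the chain/product rule (your logarithmic-derivative shortcut and the reduction of $2\rho^2 - 1/r^2 + (\sigma'/\sigma)^2 - \sigma''/\sigma$ to $\eta$ both check out), and your factorisation $\nabla^2 f = 2A^TDA$ with $D_{ii}=\eta_i r_i^2/\sigma^2(\mu_i)$ is exactly the paper's sum-of-squares form $\sum_i(\sum_k \psi_i a_{ik}u_k)^2$ with $\psi_i^2 = D_{ii}$, followed by the same full-rank and continuity arguments.
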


\begin{proof}
The formulas for the first and second derivatives of $f$ are proved by
elementary but tedious calculations, using the chain rule and the quotient
rule for differentiation. If $A$ has full rank, and $\eta \left( \mu
_{i},r_{i}\right) >0$ for all $i$, then%
\begin{equation*}
H\left( \boldsymbol{u}\right) \equiv \sum_{k,l}u_{k}\frac{\partial ^{2}f}{%
\partial y_{k}\partial y_{l}}u_{l}=\sum_{i}\sum_{k,l}\psi
_{i}a_{ik}u_{k}\psi _{i}a_{il}u_{l}=\sum_{i}\left( \sum_{k}\psi
_{i}a_{ik}u_{k}\right) ^{2}
\end{equation*}%
for all $\boldsymbol{u}\in \mathbb{R}^{n}$,where%
\begin{equation*}
\psi _{i}=\sqrt{\eta \left( \mu _{i},r_{i}\right) \frac{r_{i}^{2}}{\sigma
^{2}\left( \mu _{i}\right) }}\text{.}
\end{equation*}%
Hence $H\left( \boldsymbol{u}\right) \geq 0$. Assume that $H\left( 
\boldsymbol{u}\right) =0$. Then $\sum_{k}\psi _{i}a_{ik}u_{k}=0$ for all $i$%
. Note that $\psi _{i}>0$, since when $r\rightarrow 0$ we have 
\begin{equation*}
\eta \left( \mu ,r\right) \frac{r^{2}}{\sigma ^{2}\left( \mu \right) }%
\rightarrow \frac{1}{\sigma ^{2}\left( \mu \right) }>0\text{. }
\end{equation*}%
Hence $\boldsymbol{u}=0$, since the vectors $\boldsymbol{v}_{i}=\left( \psi
_{i}a_{i1},\psi _{i}a_{i2},...,\psi _{i}a_{in}\right) $, $i=1,2,...,m$ span $%
\mathbb{R}^{n}$. This shows that $H$ is positive definite at $\boldsymbol{y}$%
. By continuity, $H$ is positive definite, and hence $f$ strictly convex, in
a neighborhood of $\boldsymbol{y}$. Similarly, if $A$ has full rank and $%
\eta \left( \mu _{i},r_{i}\right) <0$, then 
\begin{equation*}
H\left( \boldsymbol{u}\right) \equiv \sum_{k,l}u_{k}\frac{\partial ^{2}f}{%
\partial y_{k}\partial y_{l}}u_{l}=-\sum_{i}\left( \sum_{k}\psi
_{i}a_{ik}u_{k}\right) ^{2}
\end{equation*}%
where 
\begin{equation*}
\psi _{i}=\sqrt{-\eta \left( \mu _{i},r_{i}\right) \frac{r_{i}^{2}}{\sigma
^{2}\left( \mu _{i}\right) }}
\end{equation*}%
and it follows with a similar argument as above that $f$ is strictly concave
in a neighborhood of $\boldsymbol{y}$.
\end{proof}

Most inverse modelling methods works perfectly for synthetic data, i.e. when
the "observed" sensor response is calculated using a dispersion model (the
same dispersion model that is then used to solve the inverse problem).
Indeed showing that an inverse modelling method works well for synthetic
data and, in particular, slightly perturbed synthetic data is usually
included in the body of work motivating the method. In view of the theorem
we make the following observation.

\begin{remark}
Note that when $r\rightarrow 0$, then $\eta \left( \mu ,r\right) \rightarrow
\infty $, so for small enough $r$, $\eta $ is always positive, if $\mu $ is
restricted to a compact set. Hence for $\boldsymbol{z}$ sufficiently close
to $\mathbf{\mu }\left( \boldsymbol{y}\right) $, $f\left( \boldsymbol{y}%
\right) $ is convex. This explains why most methods work well on synthetic
model data with small perturbations.
\end{remark}

We will now restrict ourselves to functions $\sigma \left( \mu \right) $
which are continuous, increasing, convex, and positive for $\mu >0$, and
satisfying%
\begin{equation*}
\lim_{\mu \rightarrow \infty }\frac{\sigma \left( \mu \right) }{\mu }=1\text{%
. }
\end{equation*}%
Moreover, we assume that $\sigma $ is twice continuously differentiable,
except possibly at a finite number of points. Hence $\sigma ^{\prime \prime
}\left( \mu \right) \geq 0$ at all points where $\sigma ^{\prime \prime }$
exists. Note that if $\sigma \left( \mu \right) $ satisfies these
conditions, so does 
\begin{equation*}
\sigma _{\delta }\left( \mu \right) \equiv \delta \sigma \left( \mu /\delta
\right)
\end{equation*}%
for any $\delta >0$, so the class of functions $\sigma $ we consider is
scale--invariant. We consider three examples:%
\begin{eqnarray}
\sigma \left( \mu \right) &=&erf\left( \mu \right) -2\left( 1-\frac{1%
}{\sqrt{\pi }}\right) \exp \left( -\mu ^{2}\right) \text{ (Smooth threshold)}
\notag \\
\sigma \left( \mu \right) &=&\max \left( 1,\mu \right) \text{ (Piecewise
linear)}  \label{eqn:thesigmas} \\
\sigma \left( \mu \right) &=&\mu \text{ (Linear)}  \notag
\end{eqnarray}%

\begin{figure}[H]
\centering
\includegraphics[scale=0.25]{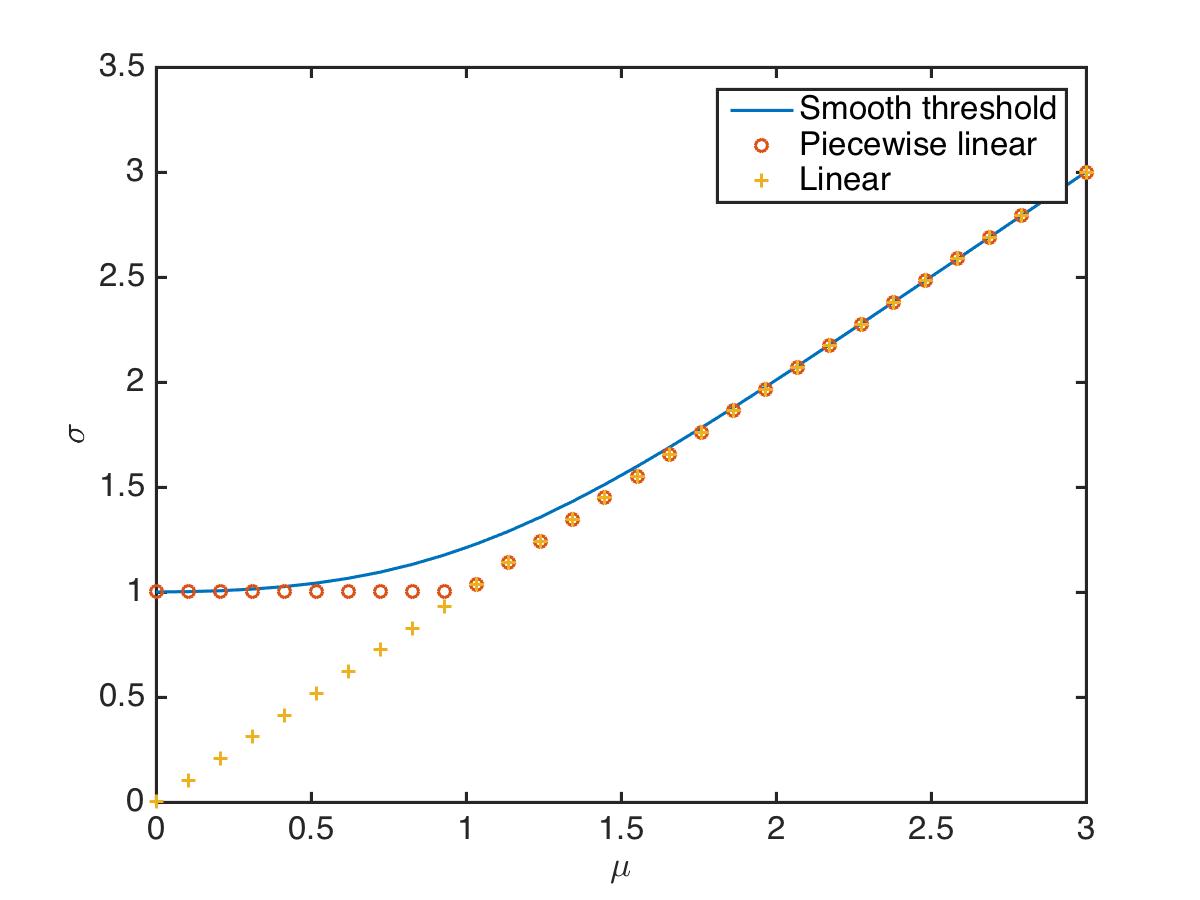}
\caption{Graphs of the three considered $%
\protect\sigma $--functions.}
\label{fig_simplex}
\end{figure}

Recall that

\begin{equation*}
f\left( \boldsymbol{y}\right) =\sum_{i}\frac{\left( z_{i}-\mu _{i}\right)
^{2}}{\sigma ^{2}\left( \mu _{i}\right) }
\end{equation*}%
Since we assume that $\sigma ^{\prime \prime }\geq 0$ and $\sigma >0$ for $%
\mu >0$, we can write $\eta $ on the form%
\begin{equation*}
\eta =\left( \frac{1}{r}-\alpha \right) \left( \frac{1}{r}-\beta \right)
\end{equation*}%
where%
\begin{eqnarray*}
\alpha \left( \mu \right) &=&-2\frac{\sigma ^{\prime }\left( \mu \right) }{%
\sigma \left( \mu \right) }-\sqrt{\left( \frac{\sigma ^{\prime }\left( \mu
\right) }{\sigma \left( \mu \right) }\right) ^{2}+\frac{\sigma ^{\prime
\prime }\left( \mu \right) }{\sigma \left( \mu \right) }} \\
\beta \left( \mu \right) &=&-2\frac{\sigma ^{\prime }\left( \mu \right) }{%
\sigma \left( \mu \right) }+\sqrt{\left( \frac{\sigma ^{\prime }\left( \mu
\right) }{\sigma \left( \mu \right) }\right) ^{2}+\frac{\sigma ^{\prime
\prime }\left( \mu \right) }{\sigma \left( \mu \right) }}
\end{eqnarray*}%
Clearly, $\eta >0$ if and only if the two factors have the same sign, i.e., $%
1/r\in \left[ \alpha ,\beta \right] ^{c}$, and $\eta <0$ if and only if $%
1/r\in \left( \alpha ,\beta \right) $. To analyze these conditions further
we need to consider the cases when $\alpha ,\beta $ have the same sign and
when they have different signs. If $\alpha ,\beta $ have the same sign then $%
1/r\in \left( \alpha ,\beta \right) $ if and only if $r\in \left( 1/\beta
,1/\alpha \right) $ (this includes the limiting cases $\beta \nearrow 0$ or $%
\alpha \searrow 0$, with $1/\beta =-\infty $ and $1/\alpha =\infty $). If $%
\alpha ,\beta $ have different signs, i.e., $\alpha <0<\beta $, then $1/r\in
\left( \alpha ,\beta \right) $ if and only if $r\in \left[ 1/\alpha ,1/\beta %
\right] ^{c}$. By the assumptions on $\sigma $, $\alpha $ and $\beta $ are
defined everywhere except possibly at a finite number of points. Henceforth,
we only consider points where $\alpha $ and $\beta $ are defined. We can now
conclude that $\eta <0$ if and only if either i) $z-\mu \in \left( \beta
^{-1},\alpha ^{-1}\right) $ and $\alpha <\beta \leq 0$, or ii) $z-\mu \in %
\left[ \alpha ^{-1},\beta ^{-1}\right] ^{c}$ and $\alpha <0<\beta $, or iii) 
$z-\mu \in \left( \beta ^{-1},\infty \right) $ and $0=\alpha <\beta $.
Likewise, we can conclude by complementarity that $\eta >0$ if and only if
either i) $z-\mu \in \left[ \beta ^{-1},\alpha ^{-1}\right] ^{c}$ and $%
\alpha <\beta \leq 0$, or ii) $z-\mu \in \left( \alpha ^{-1},\beta
^{-1}\right) $ and $\alpha <0<\beta $, or iii) $z-\mu \in \left( -\infty
,\beta ^{-1}\right) $ and $0=\alpha <\beta $ or iv) $\alpha =\beta =0$.

Relying on these inequalities we can find domains where a distance function
is guaranteed to be strictly convex or strictly concave.

\begin{example}
Consider $\sigma \left( \mu \right) =\mu $, the linear case. Then $\alpha
(\mu )=-3/\mu $ and $\beta (\mu )=-1/\mu $. Hence $\alpha ^{2}>\beta $ and $%
\eta \left( \mu ,r\right) >0$ if and only if $r\in \left( -\infty ,-\mu
\right) \cup \left( -\mu /3,\infty \right) $, i.e., $z=\mu +r\in \left(
-\infty ,0\right) \cup \left( 2\mu /3,\infty \right) $. Moreover, $\eta
\left( \mu ,r\right) <0$ if and only if $z=\mu +r\in \left( 0,2\mu /3\right) 
$. Hence for fixed $\boldsymbol{z}$, $f$ is strictly convex on the set $\cap
_{i}\left\{ 2\mu _{i}\left( \boldsymbol{y}\right) /3<z_{i}\right\} $, and
strictly concave on the set $\cap _{i}\left\{ 2\mu _{i}\left( \boldsymbol{y}%
\right) /3>z_{i}\right\} $. Let for example%
\begin{equation*}
A=%
\begin{bmatrix}
1 & 0 \\ 
0 & 1 \\ 
1 & 1%
\end{bmatrix}%
\text{, }z=%
\begin{bmatrix}
1 \\ 
1 \\ 
5/3%
\end{bmatrix}%
\text{.}
\end{equation*}%
Then%
\begin{equation*}
f\left( y_{1},y_{2},y_{3}\right) =\frac{\left( 1-y_{1}\right) ^{2}}{y_{1}^{2}%
}+\frac{\left( 1-y_{2}\right) ^{2}}{y_{2}^{2}}+\frac{\left(
3-y_{1}-y_{2}\right) ^{2}}{\left( y_{1}+y_{2}\right) ^{2}}\text{,}
\end{equation*}%
and $f$ is strictly convex on $\left\{ \left( y_{1},y_{2},y_{3}\right) \in 
\mathbb{R}_{+}^{3}\mid 2y_{1}/3<1\text{, }2y_{2}/3<1\text{,}2\left(
y_{1}+y_{2}\right) /3<5/3\right\} $, and strictly concave on $\left\{ \left(
y_{1},y_{2},y_{3}\right) \in \mathbb{R}_{+}^{3}\mid 2y_{1}/3>1\text{, }%
2y_{2}/3>1\text{,}2\left( y_{1}+y_{2}\right) /3>5/3\right\} $ (the last
condition is not active). Of course, $f$ may be strictly convex or concave
on points outside these sets also;\ these conditions are sufficient, but not
necessary. The domains of local convexity and local concavity are plotted in
the next figure.

\begin{figure}[H]
\centering
\includegraphics[scale=0.25]{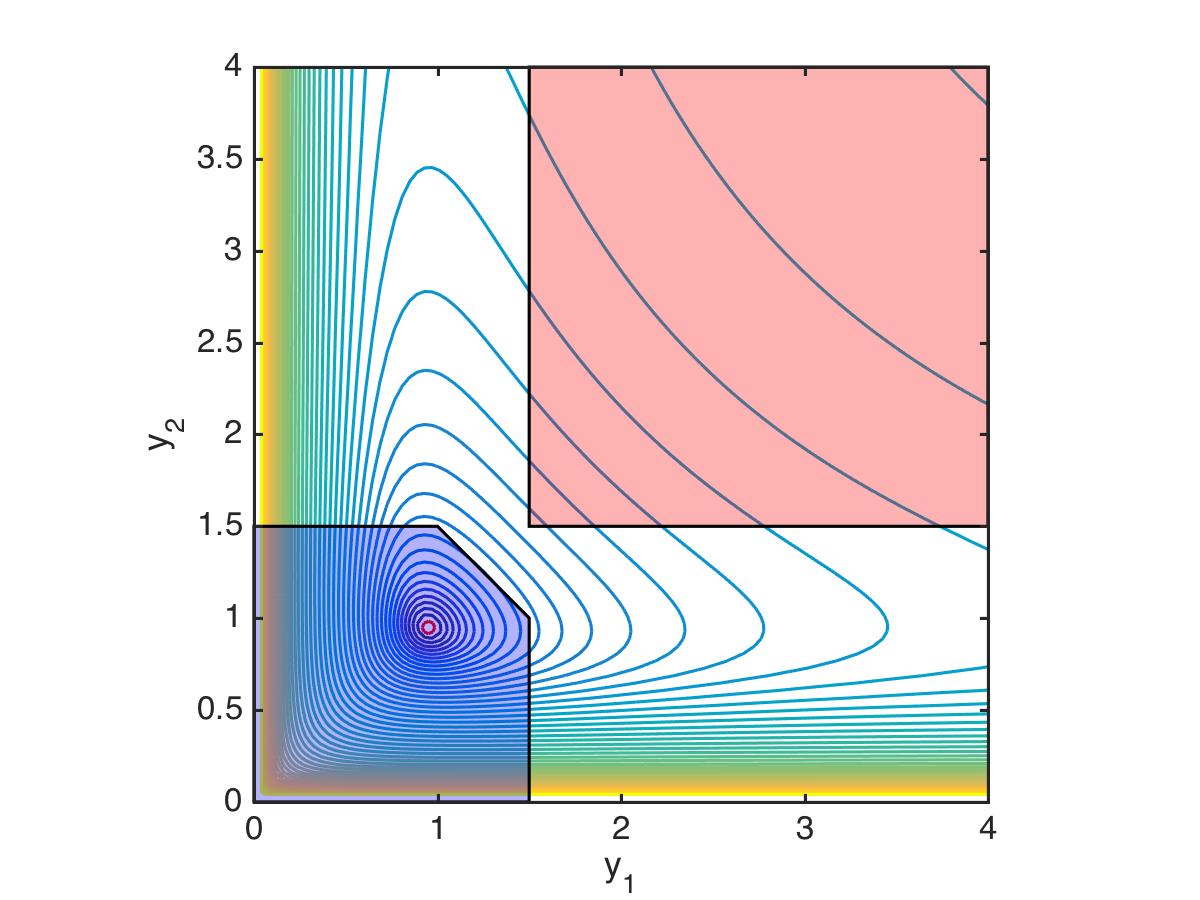}
\caption{This
figure shows isocurves of $f\left( y_{1},y_{2}\right) $, the local convexity
region (blue) and the local concavity region (red), determined from Theorem
1. The minimum point is marked with a red circle.}
\label{fig2}
\end{figure}

\end{example}

\begin{remark}
When $\sigma $ is rescaled\ (i.e., replacing $\sigma \left( \mu \right) $ by 
$\sigma _{\delta }\left( \mu \right) $), the derivatives are scaled
according to $\sigma _{\delta }^{\prime }\left( \mu \right) =\sigma ^{\prime
}\left( \mu /\delta \right) $ and $\sigma _{\delta }^{\prime \prime }=\delta
^{-1}\sigma ^{\prime \prime }\left( \mu /\delta \right) $, and hence $\alpha
\left( \mu \right) $ is replaced by $\alpha _{\delta }\left( \mu \right)
=\delta ^{-1}\alpha \left( \mu /\delta \right) $, and $\beta \left( \mu
\right) $ is replaced by $\beta _{\delta }\left( \mu \right) =\delta
^{-1}\beta \left( \mu /\delta \right) $.
\end{remark}

\subsection{Examples of follower problems with unique and non-unique
solutions}

In Theorem \ref{thm:local_convexity} we established sufficient conditions
for the objective function $f$ to be locally strictly convex with the view
of determining when the follower problem is well posed. Alas, strict
convexity of the objective function is not sufficient to make the follower
problem convex as the constraints have to be taken into account. We explore
the interplay between the objective function and the constraints and their
effect on minima and well posedness of the minimization problem through a
series of examples. It is instructive to begin with the linear case $\sigma
(\mu )=\mu $, and then proceed to the nonlinear cases (smooth threshold and
piecewise linear), compare (\ref{eqn:thesigmas}).

\subsubsection{The linear case}

In the linear case, the minimization problem for $f$ can be formulated as a
constrained minimization problem with a convex objective function. However,
the constraints are not convex, which can cause multiple minimum points for
certain values of $A$ and $\boldsymbol{z}$. For $\sigma \left( \mu \right)
=\mu $, $f$ is a strictly convex function of $\boldsymbol{\xi }\equiv 
\boldsymbol{\mu }^{-1}=\left( \mu _{1}^{-1},...,\mu _{m}^{-1}\right) $:%
\begin{equation*}
f\left( \boldsymbol{y}\right) =F\left( \boldsymbol{\xi }\right) \equiv
\sum_{i}\left( z_{i}\xi _{i}-1\right) ^{2}\text{.}
\end{equation*}%
Writing $F$ on the form%
\begin{equation*}
F\left( \boldsymbol{\xi }\right) =\sum_{i:b_{i}\neq 0}\frac{\left( \xi
_{i}-z_{i}^{-1}\right) ^{2}}{\left( z_{i}^{-1}\right) ^{2}}+m-m^{\prime }%
\text{,}
\end{equation*}%
where $m^{\prime }=\sum_{i:b_{i}\neq 0}1$ is the number of $b_{i}\neq 0$, we
see that for $c>0$, the level surface $F\left( \boldsymbol{\xi }\right)
=c+m-m^{\prime }$ is an $m^{\prime }$--axial ellipsoidal cylinder with
center coordinates $\xi _{i}=z_{i}^{-1}$ and corresponding semiaxes $\sqrt{c}%
z_{i}^{-1}$ for all $i$ such that $z_{i}\neq 0$, extending linearly along
all coordinates $\xi _{i}$ for which $z_{i}=0$. Let us assume that all $%
z_{i}\neq 0$ for simplicity. Then for $c>0$ the level surface $F\left( 
\boldsymbol{\xi }\right) =c$ is an $m$--axial ellipsoid centered at $\left(
z_{1}^{-1},...,z_{m}^{-1}\right) $ with semiaxes $\sqrt{c}z_{1}^{-1},...,%
\sqrt{c}z_{m}^{-1}$. The minimization problem for $f$ can be formulated as a
constrained minimization problem%
\begin{eqnarray*}
&&\min F\left( \boldsymbol{\xi }\right) \\
\boldsymbol{G}\left( \boldsymbol{\xi },\boldsymbol{y}\right) &\leq &%
\boldsymbol{0} \\
\boldsymbol{H}\left( \boldsymbol{\xi },\boldsymbol{y}\right) &=&\boldsymbol{0%
}
\end{eqnarray*}%
where $\boldsymbol{G}\left( \boldsymbol{\xi },\boldsymbol{y}\right) =-%
\boldsymbol{y}$, and $\boldsymbol{H}=\left( H_{1},...,H_{m}\right) $, where%
\begin{equation*}
H_{i}\left( \boldsymbol{\xi },\boldsymbol{y}\right) =\varphi _{i}\left( 
\boldsymbol{y}\right) -\xi _{i}\text{, }i=1,...,m
\end{equation*}%
and 
\begin{equation*}
\varphi _{i}\left( \boldsymbol{y}\right) =\frac{1}{\left( A\boldsymbol{y}%
\right) _{i}}\text{.}
\end{equation*}%
The objective function $F\left( \boldsymbol{\xi },\boldsymbol{y}\right)
=F\left( \boldsymbol{\xi }\right) $, is convex, and depends on $\boldsymbol{z%
}$, but is independent of $A$. The inequality constraints are linear and
hence convex. The equality constraints depend on $A$, but are independent of 
$\boldsymbol{z}$, and are not convex, so the problem is not convex. We have
thus separated the dependencies of $A$ and $\boldsymbol{z}$ into the
objective and constraint functions, respectively.

The equality constraints $\boldsymbol{H}\left( \boldsymbol{\xi },\boldsymbol{%
y}\right) =\boldsymbol{0}$ define an $n$--dimensional parametrized surface $%
S $ in $\mathbb{R}^{m}$ by $\boldsymbol{\xi }=\boldsymbol{\varphi }\left( 
\boldsymbol{y}\right) $. The $n$--dimensional tangent space to $S$ at $%
\boldsymbol{\varphi }\left( \boldsymbol{y}\right) $ is spanned by the
tangent vectors%
\begin{equation*}
\frac{\partial \boldsymbol{H}}{\partial y_{j}}\left( \boldsymbol{y}\right) 
\text{, }j=1,...,n\text{. }
\end{equation*}%
The Karush-Kuhn-Tucker (KKT) conditions (see e.g. \cite{NocedalWright2006})
for the minimization problem are%
\begin{eqnarray*}
-\frac{\partial F}{\partial \xi _{i}} &=&\sum_{k=1}^{n}\mu _{k}\frac{%
\partial G_{k}}{\partial \xi _{i}}+\sum_{l=1}^{m}\lambda _{l}\frac{\partial
H_{l}}{\partial \xi _{i}}=\lambda _{i}\text{, }i=1,...,m \\
0 &=&-\frac{\partial F}{\partial y_{j}}=\sum_{k=1}^{n}\mu _{k}\frac{\partial
G_{k}}{\partial y_{j}}+\sum_{l=1}^{n}\lambda _{l}\frac{\partial H_{l}}{%
\partial y_{j}}\text{, }j=1,...,n \\
\mu _{j}G_{j} &=&0\text{, }j=1,...,n \\
\mu _{j} &\geq &0\text{, }j=1,...,n
\end{eqnarray*}%
which are necessary conditions for minimum. Assume for simplicity that we
have a minimum with $\boldsymbol{y}^{\ast }>0$. Then $G_{j}<0$ so $\mu
_{j}=0 $. Substituting the first equation in the second we get 
\begin{equation*}
\sum_{l=1}^{m}\frac{\partial F}{\partial \xi _{l}}\frac{\partial H_{l}}{%
\partial y_{j}}=0\text{, }j=1,...,n\text{.}
\end{equation*}%
Geometrically, this conditions means that the gradient of $F$ is orthogonal
to the tangent space of $S$ at $y^{\ast }$. Hence, in the $\boldsymbol{\xi }$%
--space, minimal points $\boldsymbol{y}^{\ast }$ are characterized by points 
$\boldsymbol{\xi }^{\ast }=\boldsymbol{\varphi }\left( \boldsymbol{y}^{\ast
}\right) $ where the $n$--dimensional surface $S:\boldsymbol{\xi }=%
\boldsymbol{\varphi }\left( \boldsymbol{y}\right) $ is tangent to the
ellipsoid $F\left( \boldsymbol{\xi }\right) =c^{\ast }$. The optimal value
is then $c^{\ast }$.

Suppose now that $A$ is fixed. For $\boldsymbol{z}$, the center of the
ellipsoidal isosurfaces of $F$, let $E\left( \boldsymbol{z}\right) $ denote
the smallest isosurface of $F$ that intersects $S$. For generic $\boldsymbol{%
z}$, the ellipsoid $E\left( \boldsymbol{z}\right) $ will contain only one
point $\boldsymbol{\varphi }\left( \boldsymbol{y}^{\ast }\right) $, defined
by the unique minimum point $\boldsymbol{y}^{\ast }$. However, for $%
\boldsymbol{z}$ in some exceptional lower--dimensional set, $E\left( 
\boldsymbol{z}\right) $ will touch $S$ at several points, in which case we
have several minimum points $\boldsymbol{y}^{\ast }$. In that case, $A$ or $%
\boldsymbol{z}$ can be perturbed so that either of the multiple minima is
perturbed into a single global minimum. Hence the global minimum point
varies discontinuously with $A$ and $\boldsymbol{z}$, and the problem is ill
posed (unless $\boldsymbol{z}$ is restricted to be generic, e.g., if $%
\boldsymbol{z}^{-1}$ is sufficiently close to the surface $S$).

\begin{example}
\label{ex:linear}Let%
\begin{equation*}
A=%
\begin{bmatrix}
1 & \varepsilon \\ 
\varepsilon & 1 \\ 
1 & 1%
\end{bmatrix}%
\text{, }\boldsymbol{z}=%
\begin{bmatrix}
1 \\ 
1 \\ 
t%
\end{bmatrix}%
\end{equation*}%
where $\varepsilon =0.1$.

\begin{figure}[H]
\centering
\includegraphics[scale=0.15]{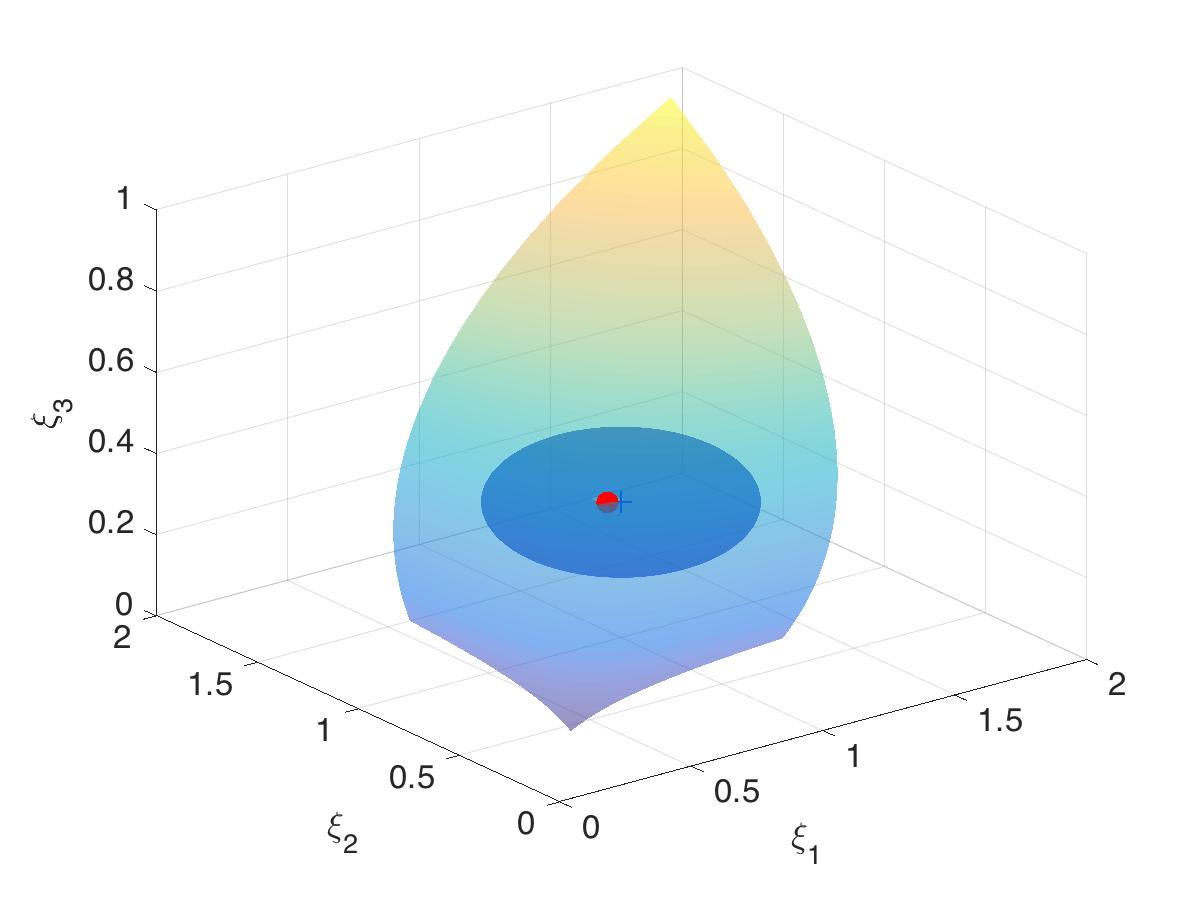}
\caption{Case $%
t=3$. The figure shows minimum point $\protect\xi ^{\ast }=\left(
0.7729,0.7729,0.4251\right) $ as a red dot, the surface $S$ parametrized for 
$0.5\leq y_{1},y_{2}\leq 5$, and the ellipsoidal isosurface for $f$ with
optimal value $c^{\ast }=0.17893$. The center $\boldsymbol{z}$ of the
ellipsoid is marked by a blue cross.}
\label{fig3}
\end{figure}

\begin{figure}[H]
\centering
\includegraphics[scale=0.15]{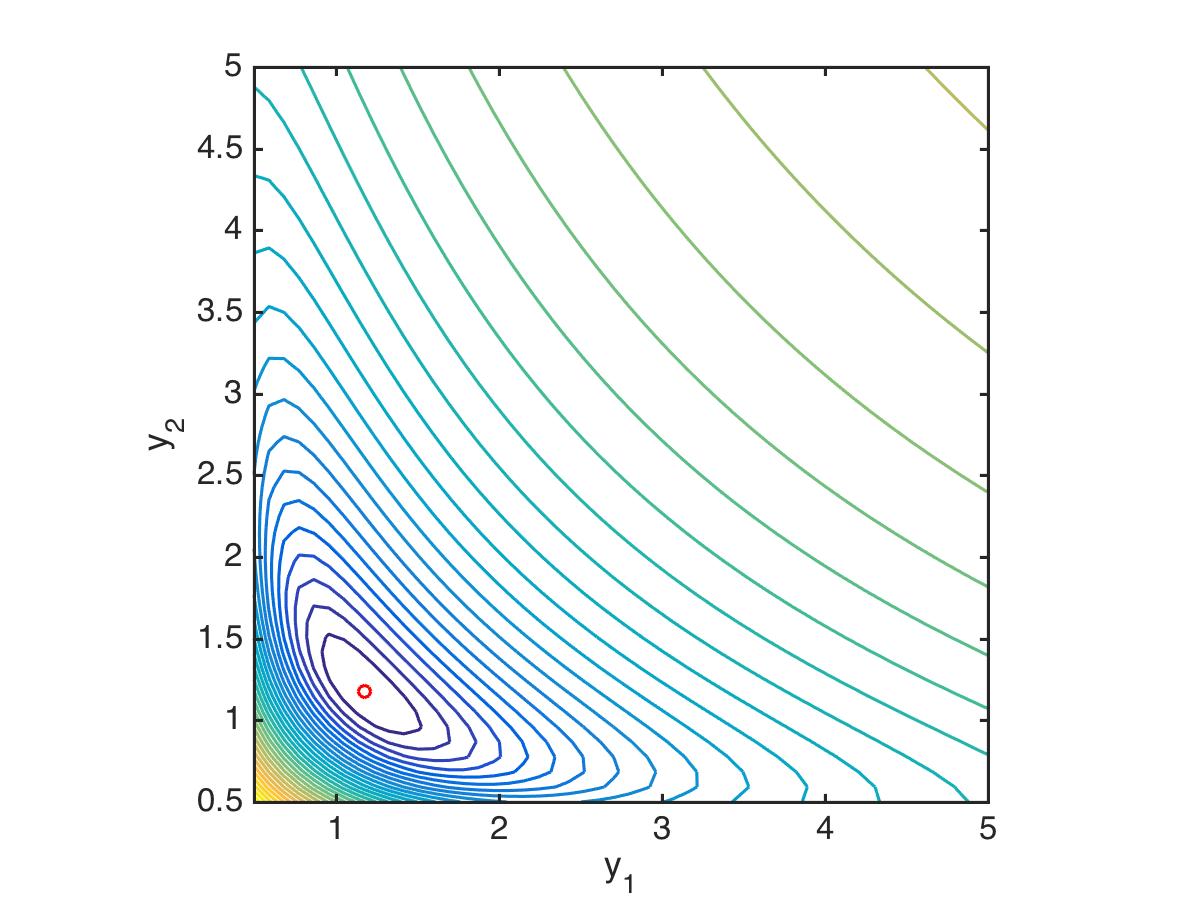}
\caption{Case $t=3$. The figure
shows the minimum point $y^{\ast }=\left( 1.1762;1.1762\right) $ marked with
a red circle, with the minimal value $c^{\ast }=0.17893$. Also isocurves for 
$f$ are shown.}
\label{fig4}
\end{figure}

\begin{figure}[H]
\centering
\includegraphics[scale=0.15]{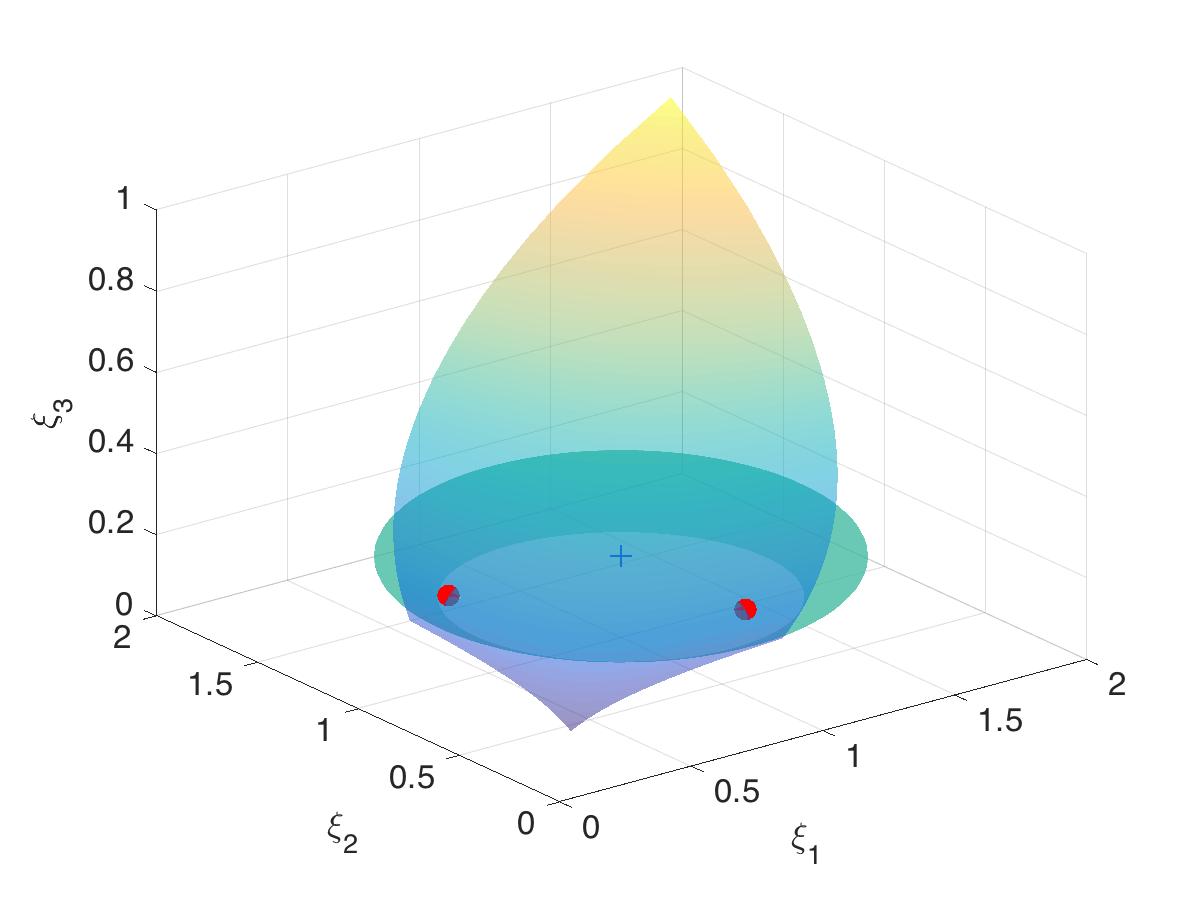}
\caption{Case $t=5.$, two minimum points $\protect\xi _{1}^{\ast }=\left(
0.9303,0.2920,0.2444\right) $, $\protect\xi _{2}^{\ast }=\left(
0.2920,0.9303,0.2444\right) $ marked by red dots, parametric surface $S$
plotted for $0.5\leq y_{1},y_{2}\leq 5$, and ellipsoidal isosurface for $f$
\ and optimal value $c^{\ast }=0.55556$. The center $\boldsymbol{z}$ of the
ellipsoid is marked by a blue cross.}
\label{fig5}
\end{figure}

\begin{figure}[H]
\centering
\includegraphics[scale=0.15]{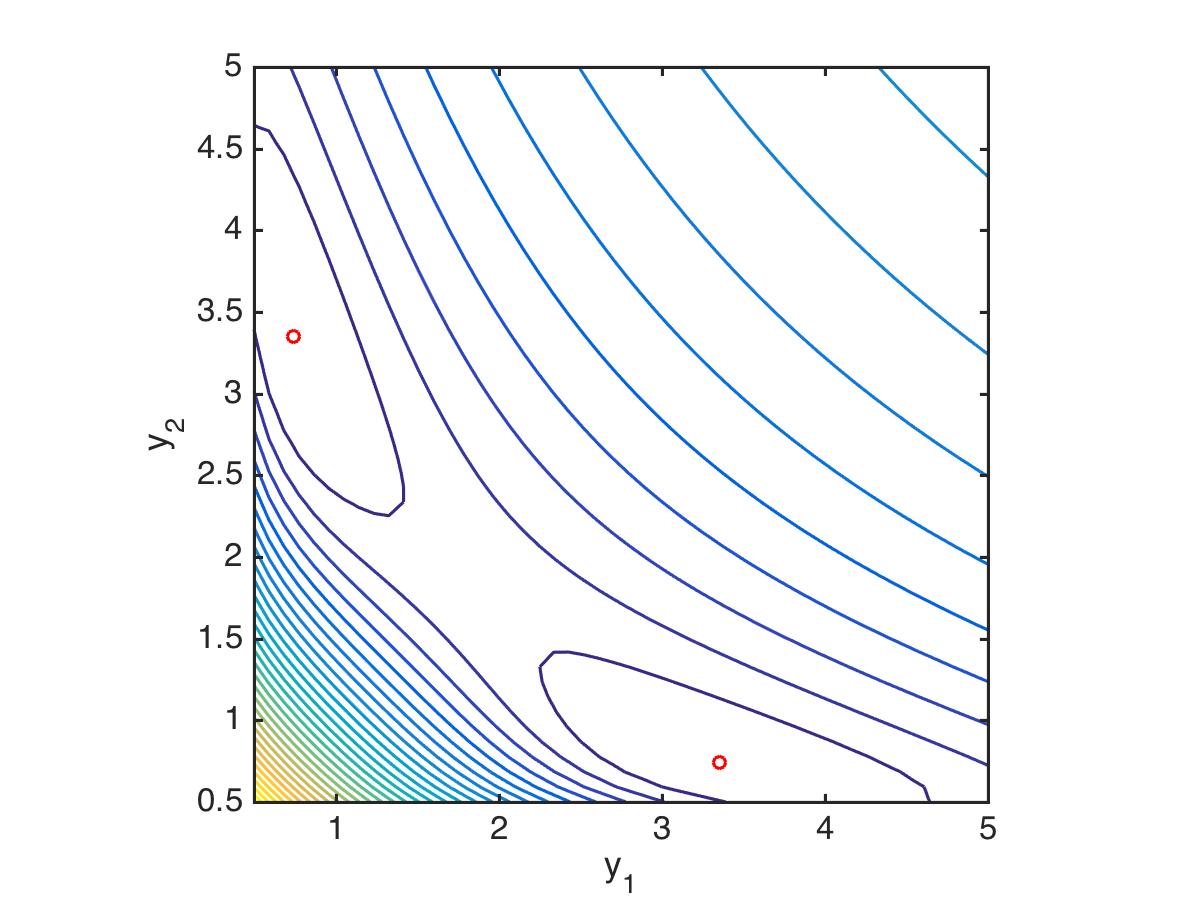}
\caption{ Case $t=5$. Two
minimal points $y_{1}^{\ast }=\left( 0.73987,3.351\right) $, $y_{2}^{\ast
}=\left( 3.351,0.73987\right) $, marked by red circles. Minimum value $%
c^{\ast }=0.55556$. Isosurfaces for $f$ also shown.}
\label{fig:linear_case_t5}
\end{figure}

\end{example}

\paragraph{The one--dimensional case}

Distinct minima cannot occur in the case $n=1$. We have

\begin{proposition}
If $a_{i}>0$, and some $b_{i}>0$ then 
\begin{equation*}
f\left( y\right) =\sum_{i=1}^{m}\frac{\left( b_{i}-a_{i}y\right) ^{2}}{%
\left( a_{i}y\right) ^{2}}=m-\frac{\alpha ^{2}}{\beta }+\beta \left( y^{-1}-%
\frac{\alpha }{\beta }\right) ^{2}
\end{equation*}%
where 
\begin{equation*}
\chi _{i}=b_{i}/a_{i}\text{, }\alpha =\sum_{i}\chi _{i}\text{ and }\beta
=\sum_{i}\chi _{i}^{2}\text{, }
\end{equation*}%
and $f$ has minimum value $m-\alpha ^{2}/\beta $, attained for%
\begin{equation*}
y=\frac{\beta }{\alpha }\text{,}
\end{equation*}%
which is a strict minimum. Moreover, the minimum point $y$ depends
continuously on $a_{i}$, $b_{i}$, so the minimization problem is well--posed.
\end{proposition}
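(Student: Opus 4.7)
The plan is to reduce the statement to a quadratic in the reciprocal variable $u = y^{-1}$, where the expression becomes manifestly a completed square.

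First I would substitute $u = 1/y$ and use the identity
$$\frac{(b_i - a_i y)^2}{(a_i y)^2} = \left(\frac{b_i}{a_i y} - 1\right)^2 = (\chi_i u - 1)^2,$$
which hinges on $a_i > 0$ so that division is legal. Summing over $i$ and expanding gives
$$f = \sum_i(\chi_i u - 1)^2 = \beta u^2 - 2\alpha u + m,$$
by the definitions of $\alpha$ and $\beta$. Next I would complete the square in $u$: since $\beta > 0$ (because $a_i > 0$ and some $b_i > 0$, so at least one $\chi_i > 0$ and $\beta = \sum \chi_i^2 > 0$), I can write
$$\beta u^2 - 2\alpha u + m = \beta\Bigl(u - \tfrac{\alpha}{\beta}\Bigr)^2 + m - \tfrac{\alpha^2}{\beta},$$
which, after substituting back $u = y^{-1}$, yields the claimed closed form.

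From this form the rest is immediate. The unique minimum over $u \in \mathbb{R}$ is at $u = \alpha/\beta$, and since $\alpha = \sum \chi_i > 0$ under the assumptions, this corresponds to a well-defined positive $y = \beta/\alpha$. Strictness of the minimum is built in: the expression is $m - \alpha^2/\beta$ plus a positive multiple of a square that vanishes only at $u = \alpha/\beta$. For continuity, I would observe that $\alpha$ and $\beta$ are polynomial in the $\chi_i$, which in turn are smooth functions of $(a_i, b_i)$ on the open set where $a_i > 0$; since $\alpha > 0$ there, the ratio $y^* = \beta/\alpha$ is a continuous (indeed smooth) function of the data.

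I do not expect any genuine obstacle: the only subtlety is verifying that $\alpha, \beta > 0$ so that both the minimization and the continuity argument make sense, which follows directly from the hypothesis $a_i > 0$ together with the existence of some $b_i > 0$.
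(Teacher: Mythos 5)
Your proof is correct and follows the same route as the paper's own (very terse) argument: the paper likewise observes that $f$ is a second-order polynomial in $y^{-1}$ and completes the square. You simply make the "elementary calculations" explicit, including the minor but worthwhile check that $\beta>0$ (and $\alpha>0$, which implicitly uses $b_i\geq 0$) so that the vertex $y=\beta/\alpha$ is admissible and depends continuously on the data.
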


\begin{proof}
The formula for $f$ is proved by elementary calculations. The final
conclusion follows from the observation that $f\left( y\right) $ is a second
order polynomial in $y^{-1}$.
\end{proof}

\begin{remark}
Note that $\partial \chi _{i}/\partial a_{i}=-a_{i}^{-1}\chi _{i}^{2}$ and $%
\partial \chi _{i}/\partial b_{i}=a_{i}^{-1}$, and the sensitivity of the
minimum point $y=y\left( a_{1},...,a_{m},b_{1},...,b_{m}\right) $ is given by%
\begin{equation*}
\frac{\partial y}{\partial a_{i}}=\frac{1}{a_{i}}\frac{\left( y-2\chi
_{i}\right) \chi _{i}^{2}}{\alpha }\text{, }\frac{\partial y}{\partial b_{i}}%
=\frac{1}{a_{i}}\frac{2\chi _{i}-y}{\alpha }
\end{equation*}%
so the minimum point $y$ is very sensitive for $a_{i}$ and $b_{i}$ when $%
a_{i}$ is small, provided that $y\neq 2\chi _{i}$.
\end{remark}

\subsubsection{The nonlinear case}

In contrast to the linear case, $f$ can have multiple local minima for $n=1$
in the nonlinear case. For certain values of $A$,$\boldsymbol{z}$, these
minimum values may be identical, hence distinct global minima. When this
occurs, the minimum problem becomes ill--posed, as the following example
shows.

\begin{example}
Let%
\begin{equation*}
A=%
\begin{bmatrix}
0.3 \\ 
0.8%
\end{bmatrix}%
\text{, }\boldsymbol{z}=%
\begin{bmatrix}
t \\ 
0%
\end{bmatrix}%
\text{, }\delta =1
\end{equation*}%
and 
\begin{equation*}
f\left( y\right) =\sum_{i=1}^{m}\frac{\left( z_{i}-a_{i}y\right) ^{2}}{%
\sigma ^{2}\left( a_{i}y\right) }\text{, }i=1,2\text{. }
\end{equation*}%
If $t=t_{0}\approx 1.0732$, then $f$ has two distinct global minimum points
for the smooth threshold $\sigma (\mu )$, (the smooth threshold is defined
in equation (\ref{eqn:thesigmas})), see Figure \ref{fig:globalmin_t10732}.
Moreover, for all $t$ in a neighborhood $U$ of $t_{0}$, $f$ has two distinct
local minima . For $t<t_{0}$ in $U$, the left local minimum of $f$ is
global, and for $t>t_{0}$ in $U$, the right local minimum of $f$ is global.
Hence when $t$ increases from $t=1.05$ to $1.10$, the global minimum point $%
y=y\left( t\right) $ of $f\left( y\right) $ jumps from $y\approx 0.4919$ to $%
y\approx 3.5733$ at $t=t_{0}$, see figures \ref{fig:globalmin_t105} and \ref%
{fig:globalmin_t110}. The function $f$ shows a similar behaviour for the
piecewise linear $\sigma $, with a slightly different value of $t_{0}$.%

\begin{figure}[H]
\centering
\includegraphics[scale=0.15]{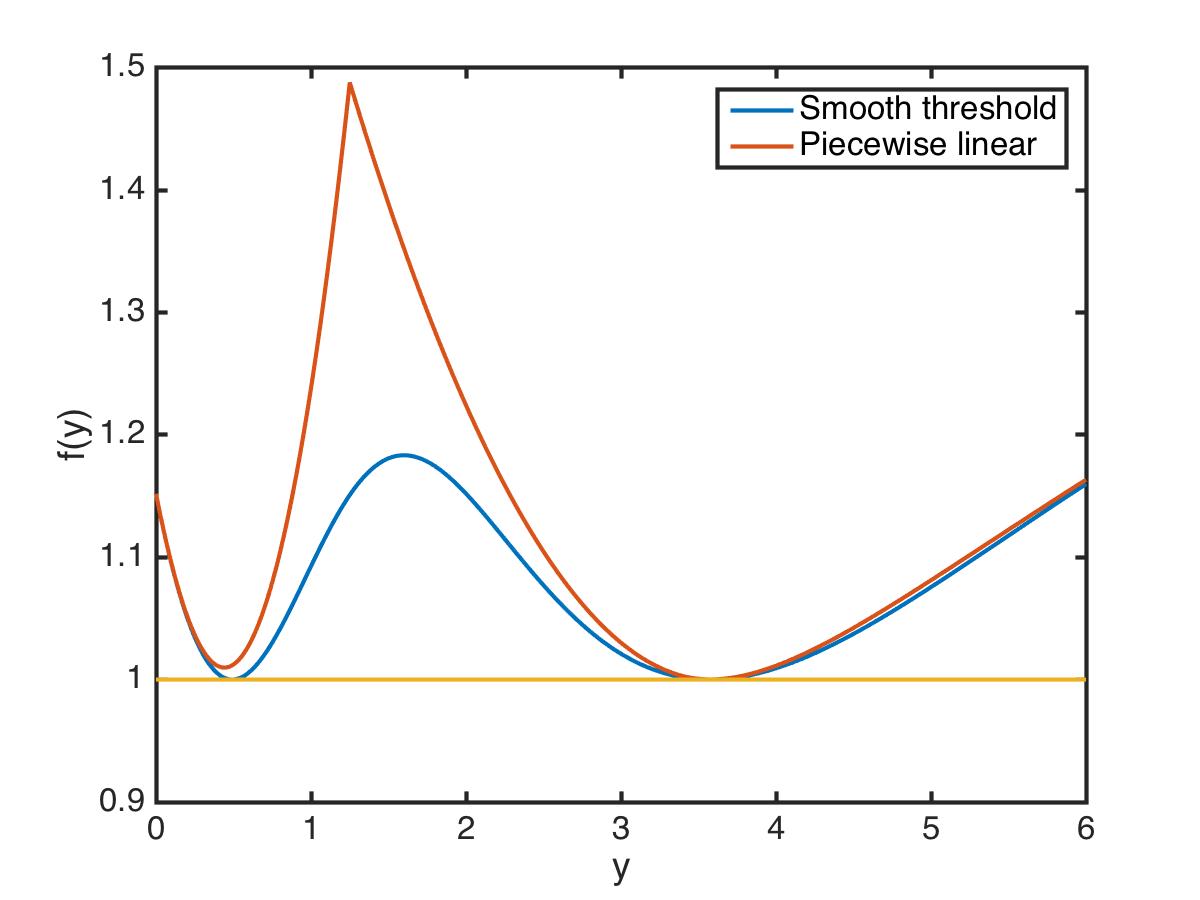}
\caption{ The figure shows $f\left(
y\right) $ for $t=t_{0}\approx 1.0732$. At this transition point, the smooth
threshold $\protect\sigma $ gives two distinct global minima, while the
piecewise linear $\protect\sigma $ gives a single global minimum to the
right. The transition point for the piecewise linear $\protect\sigma $
occurs for a sligthly different value of $t$.}
\label{fig:globalmin_t10732}
\end{figure}

\begin{figure}[H]
\centering
\includegraphics[scale=0.15]{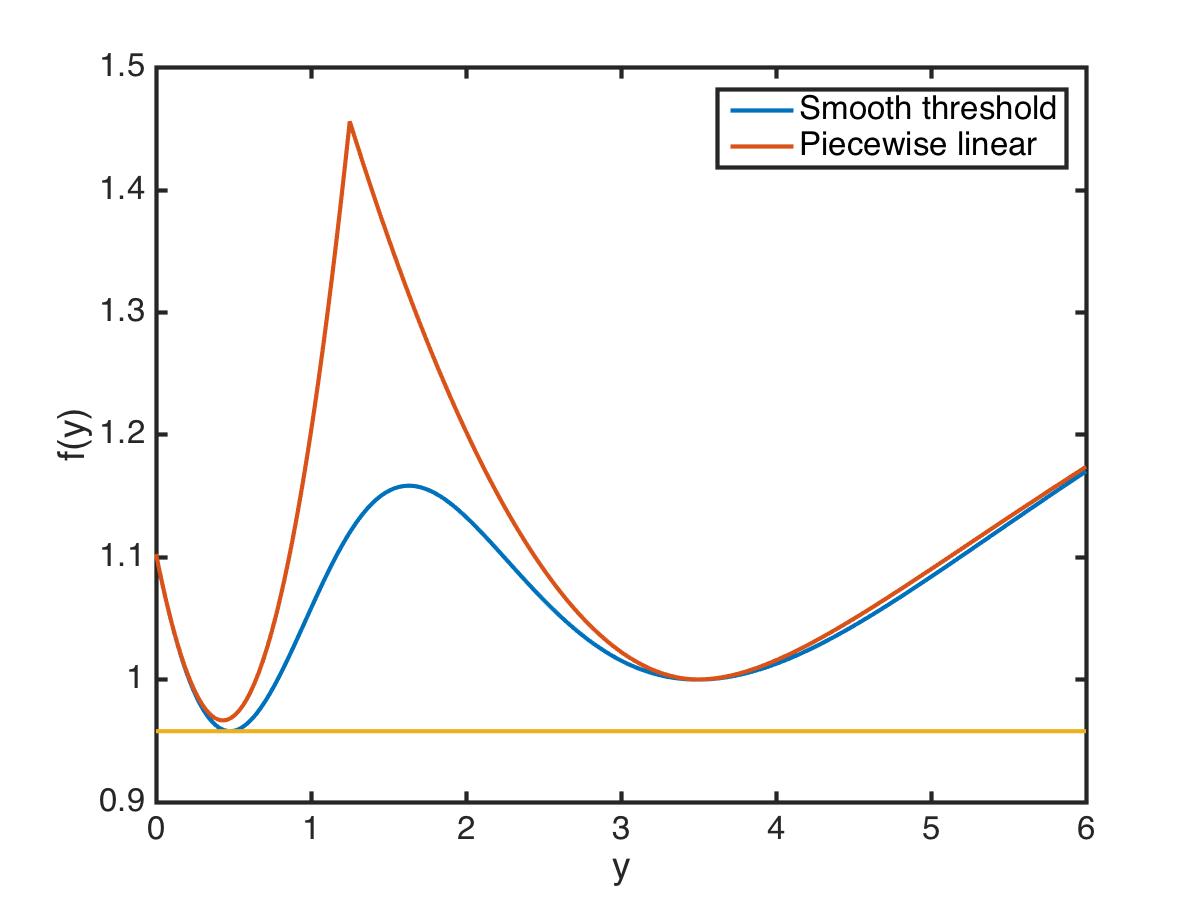}
\caption{ The figure shows $f\left( y\right) $ for $t=1.05$. Both smooth
threshold and piecewise linear $\protect\sigma $ give two distinct local
minima, the left one is also global minimum.}
\label{fig:globalmin_t105}
\end{figure}

\begin{figure}[H]
\centering
\includegraphics[scale=0.15]{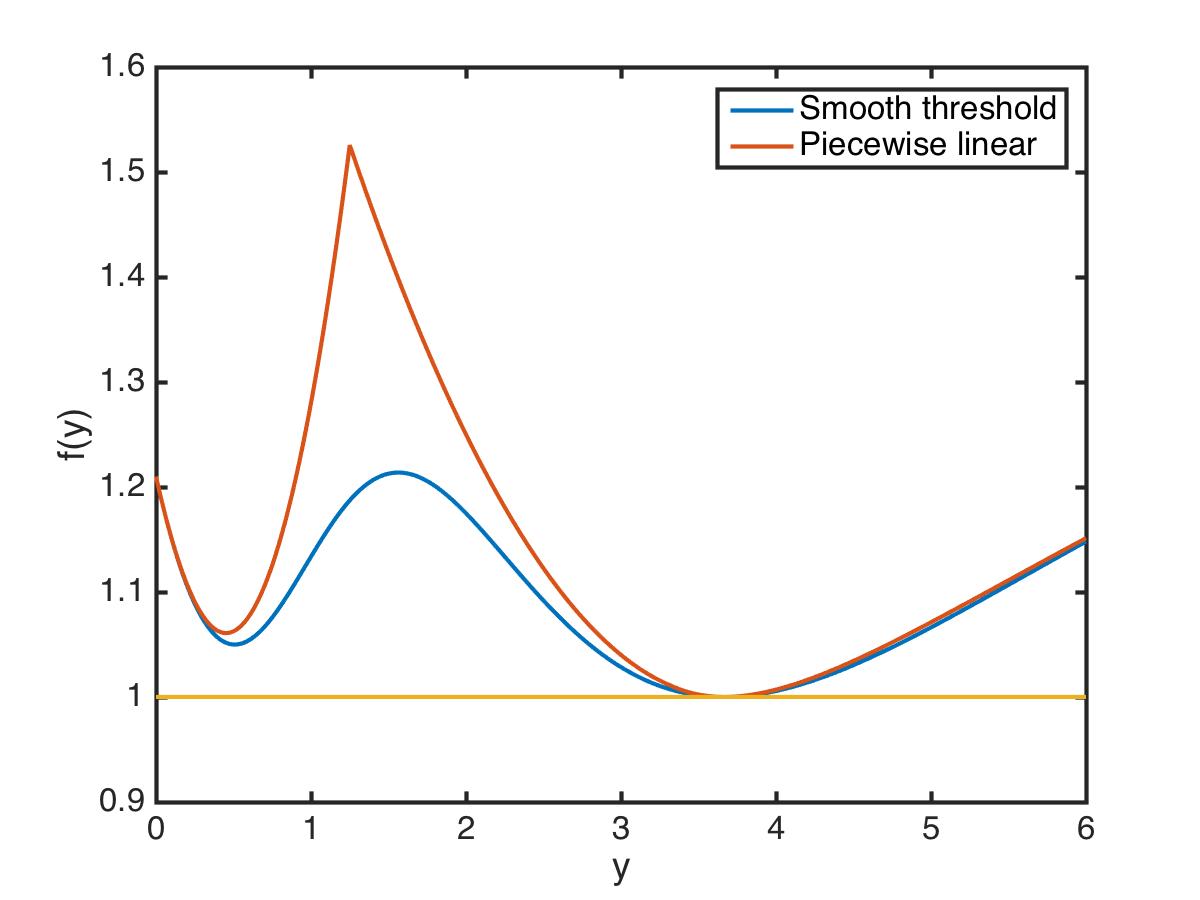}
\caption{ The figure shows $f\left( y\right) $ for $t=1.10$. Both smooth
threshold and piecewise linear $\protect\sigma $ give two distinct local
minima, the right one is also global minimum.}
\label{fig:globalmin_t110}
\end{figure}

\end{example}

In higher dimensions, nonlinear $\sigma $ give similar behaviour of $f$
regarding multiple global minima and ill--posedness of the minimization
problem.

In higher dimensions, the nonlinear $\sigma $ functions causes a similar
behaviour as the linear $\sigma $ regarding multiple local minimas of $f$.
To illustrate this, we show a slight perturbation of the bimodal example,
Example \ref{ex:linear}, above for the linear $\sigma $. The perturbed
problem still have two local minima, but only one of them is a global min.
Let 
\begin{equation*}
A=%
\begin{bmatrix}
1 & 0.1 \\ 
0.1 & 1 \\ 
1 & 1%
\end{bmatrix}%
\text{, }\boldsymbol{z}=%
\begin{bmatrix}
1.1 \\ 
0.9 \\ 
5%
\end{bmatrix}%
\text{.}
\end{equation*}%
Then we get


\begin{figure}[H]
\centering
\includegraphics[scale=0.15]{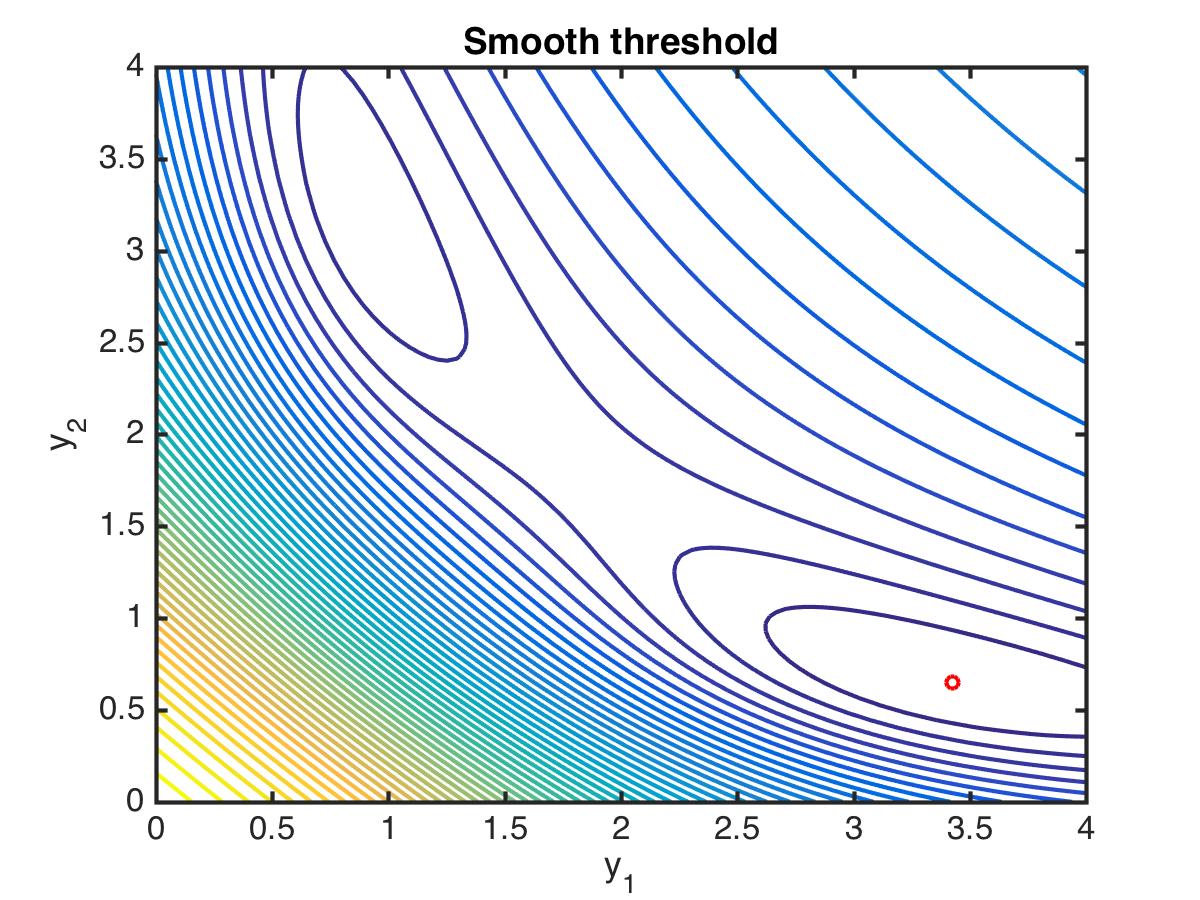}
\includegraphics[scale=0.15]{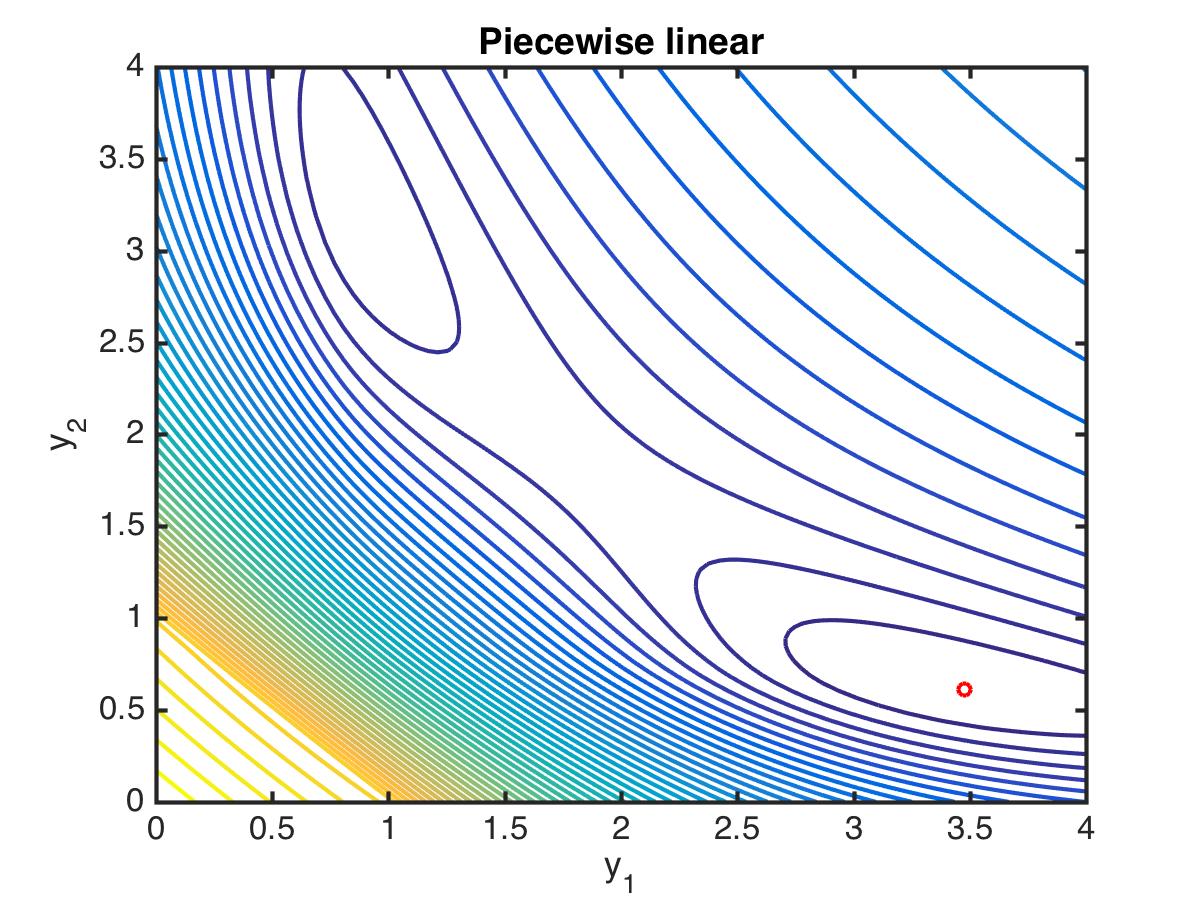}
\caption{Both smooth threshold and piecewise linear $\sigma $ functions give two
local minima in this case, and quite similar isocurve pattern. Comparison with the linear case, Figure %
\ref{fig:linear_case_t5}, in the previous example reveals a similar pattern.}
\label{fig:10_11}
\end{figure}



In some cases there are huge differences between smooth threshold and
piecewise linear $\sigma $. For example, by randomly choosing $A$ and $%
\boldsymbol{z}$, we found%
\begin{equation*}
A=%
\begin{bmatrix}
0.0557 & 0.8528 \\ 
0.3231 & 0.6868 \\ 
0.9336 & 0.1672%
\end{bmatrix}%
\text{, }\boldsymbol{z}=%
\begin{bmatrix}
1.3196 \\ 
0 \\ 
1.3499%
\end{bmatrix}%
\end{equation*}%
which gives

\begin{figure}[H]
\centering
\includegraphics[scale=0.15]{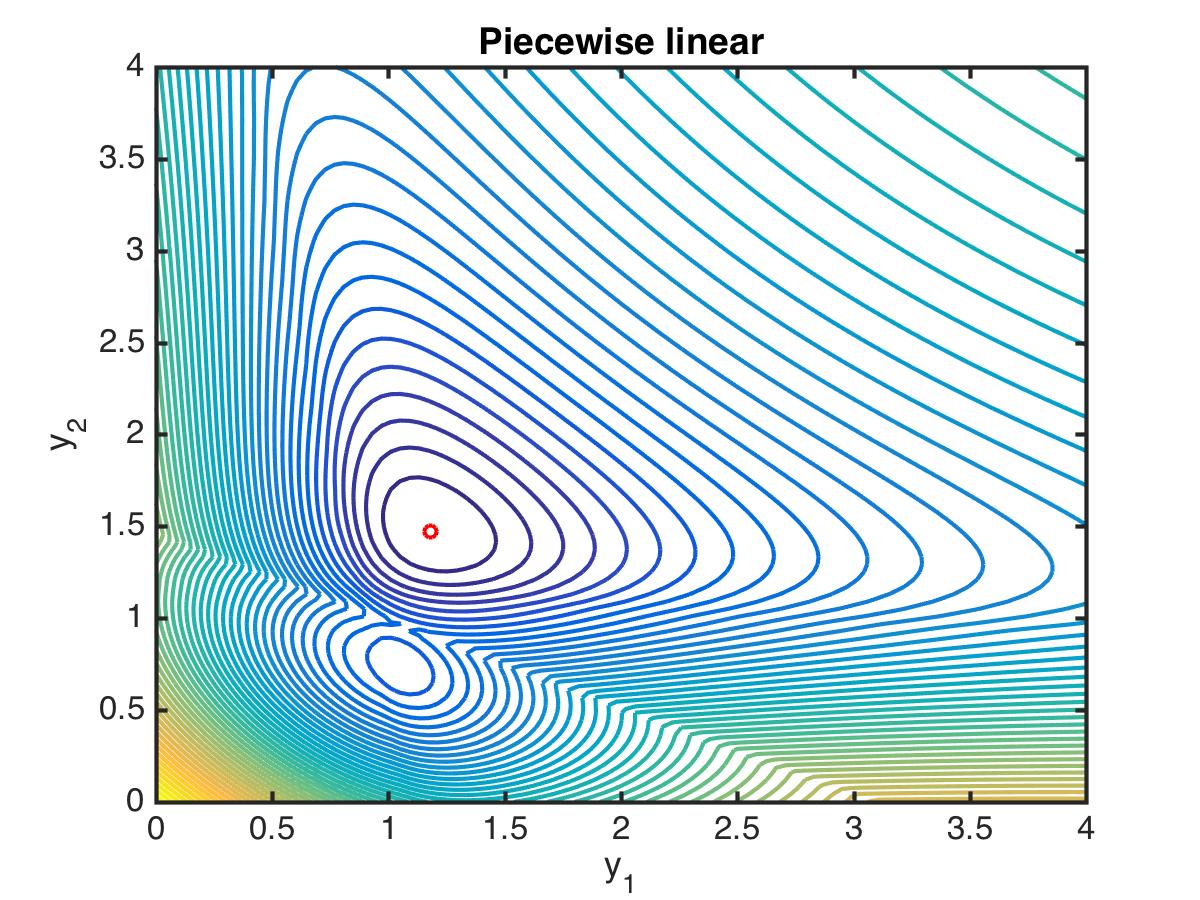}
\includegraphics[scale=0.15]{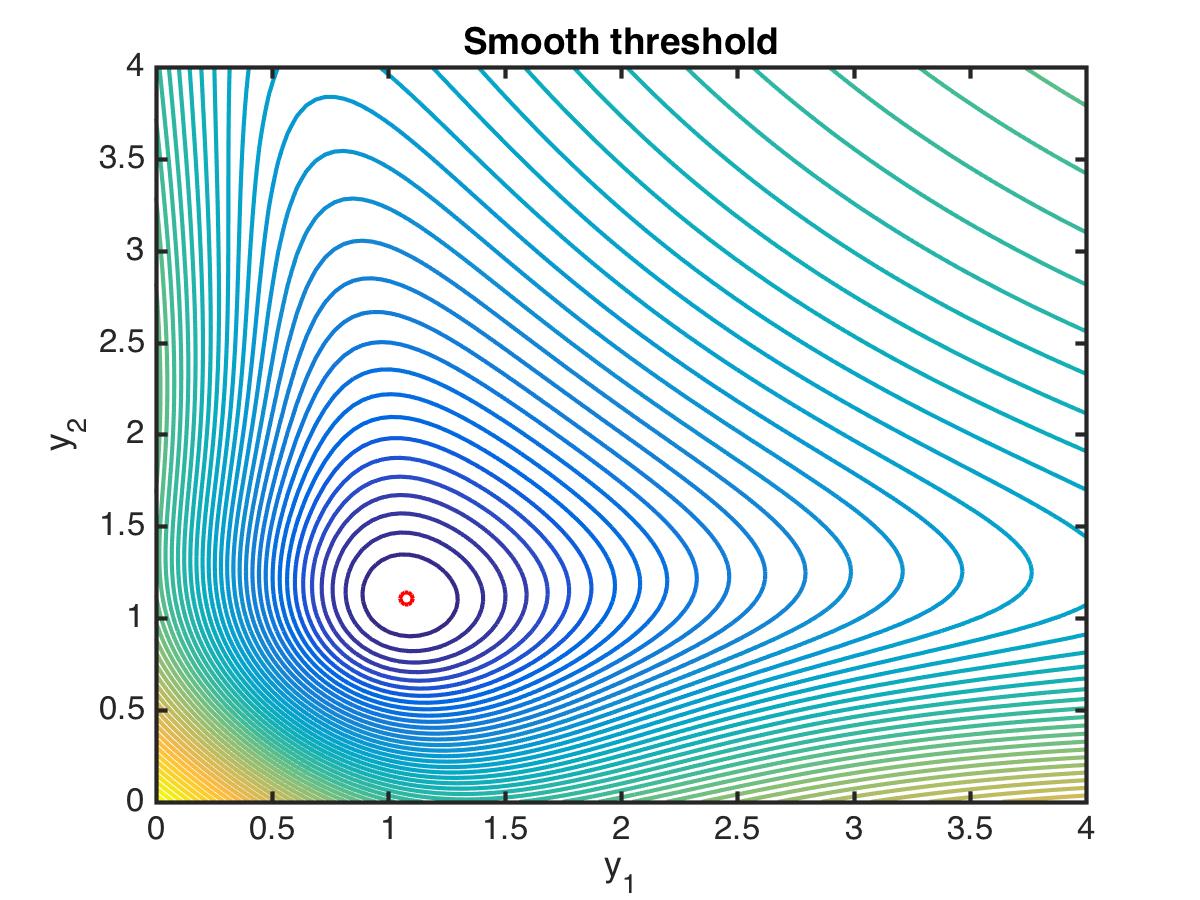}
\caption{Here we see a dramatic
difference between the discontinouous gradient and double local minima in
the piecewise linear case, and the smooth, single global minimum in the
smooth threshold and linear cases.}
\label{fig:qualitative_difference}
\end{figure}

A motivation for using a scaled piecewise linear $\sigma $ rather than the
linear $\sigma $ is that model values $\mu $ (averages) below a detection
limit should give a variance $\sigma $ of the measurement value $z$ which is
independent of $\mu $ (corresponding to the background noise). We think it
is a good compromise to use the smooth threshold $\sigma $, giving a
variance with a strictly positive lower bound, but avoiding potential
numerical difficulties as pictured in Figure \ref{fig:qualitative_difference}
above.

\section{Application: neutral gas in an urban environment}

We will now apply the bilevel optimisation method to experimental data
reconstructing the source term. The different choices of distance function,
smooth threshold, piecewise linear, linear and constant will be compared and
contrasted. The experimental data stems from the European Defence Agency
category B project MODITIC where wind tunnel experiments were conducted
involving release of both neutrally buoyant gas as well as dense gas over
urban environments of varying complexity. In all cases the wind tunnel
boundary layer was neutrally stable. The gas was released from a single
point source either as a puff or continuously at a constant rate. For the
full details on the experimental set up we refer to \cite{RobinsEtAl2016a}
and \cite{RobinsEtAl2016b}.

\subsection{Sensor data}

As the bilevel optimisation method described above is designed for linear
inverse dispersion problems we only consider cases where the released gas is
neutrally buoyant, and to further reduce the scope we focus on two scenarios
referred to as the simple array, a symmetric case with four buildings, and
the complex array, a less symmetric case involving fourteen buildings.

\subsubsection{The simple array}

The dimensions of the simple array, with the positions of the synchronized
detectors we use for backtracking is shown in Figure \ref{fig:SimpleArray}.%

\begin{figure}[H]
\centering
\includegraphics[scale=0.15]{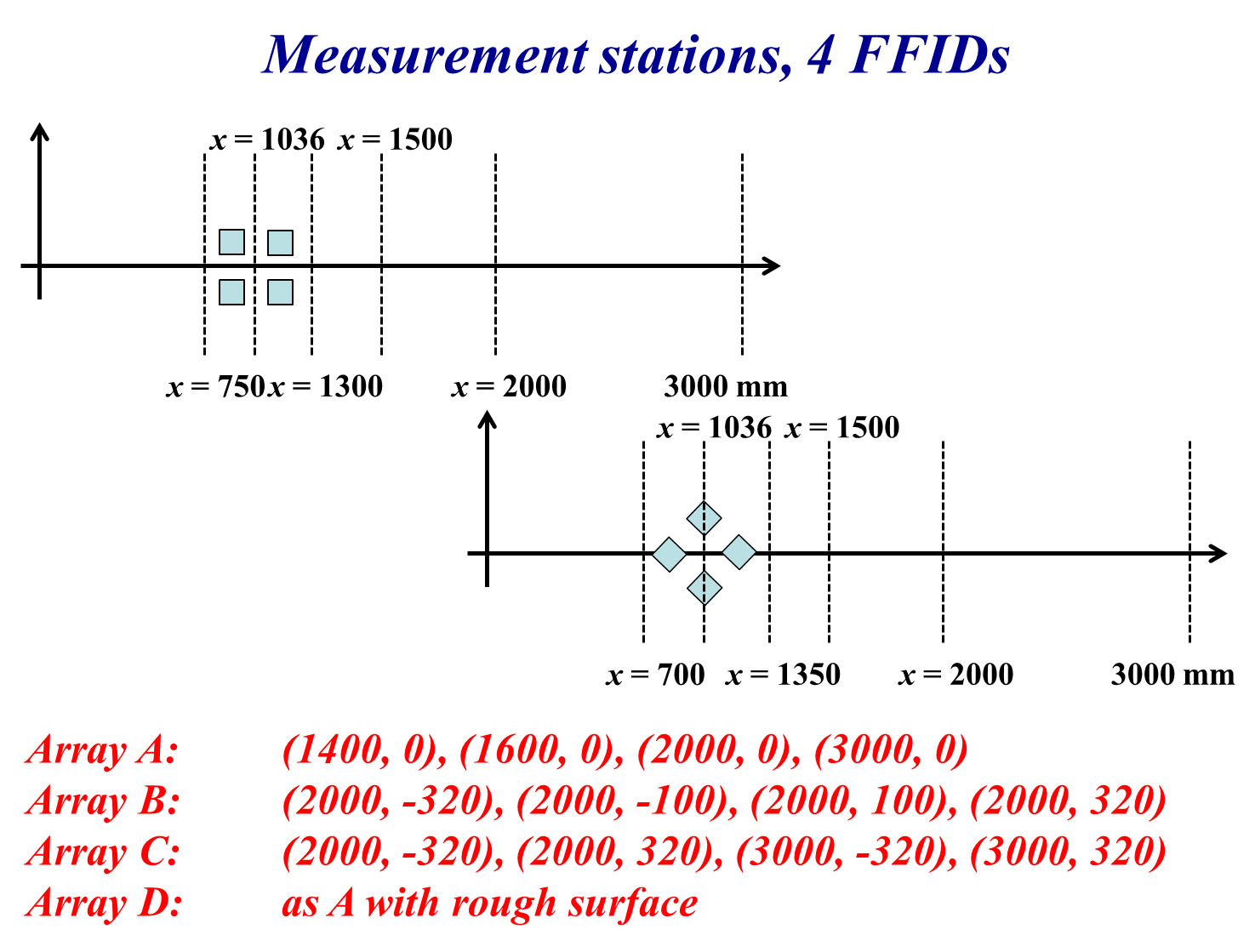}
\caption{Sensors network (A, B, C
arrangement) for the simple array cases. In each scenario 4 synchronized
detectors are used to measure the concentration of the released gas. The
dectors are located at the positions stated under A, B and C respectively.}
\label{fig:SimpleArray}
\end{figure}

The wind direction in the wind tunnel experiments is aligned with the x-axis
in Figure \ref{fig:SimpleArray}, and as shown in the figure the simple array
may aligned at two different angles: we denote the alignment in the upper
pane as \textquotedblleft 0 degrees\textquotedblright\ and the alignment in
the lower pane as \textquotedblleft 45 degrees\textquotedblright . In each
scenario four synchronized detectors were used, hence the reference 4 FFID
in Figure \ref{fig:SimpleArray}, and we chose three different sets of
locations for these detectors: we refer to these as case A, B and C. In
addition to this two different source locations were available: these we
denote S1 and S2, both S1 and S2 (at the origin) are located at 8H upwind
(=-0.88m) in the x direction, but S1 is shifted off the x-axis by 1.5H
(=0.165m) in the +y direction. The location of S2 was chosen to be the
origin of the coordinate system. The diameter of the sources is 0.1m (to be
compared to building height and sides of H=0.11m). A constant release rate
of 50l/min=8.33e-4 m$^{3}$/s was used in all scenarios.

\subsubsection{The complex array}

In the complex array there more buildings present and there is less
symmetry. The configuration also opens up for a larger number of sensible
detector locations, however, still only 4 synchronized detectors were used
in these scenarios. For the complex array we chose five different sets of
detector configurations: these are denoted case A through to E, and these
are shown together with the geometry of the complex array in Figure \ref%
{fig:ComplexArray}. Note that some detectors are located on the roof tops
(cases A and C) and other are inside the street canyons (cases B, D and E).

\begin{figure}[H]
\centering
\includegraphics[scale=0.35]{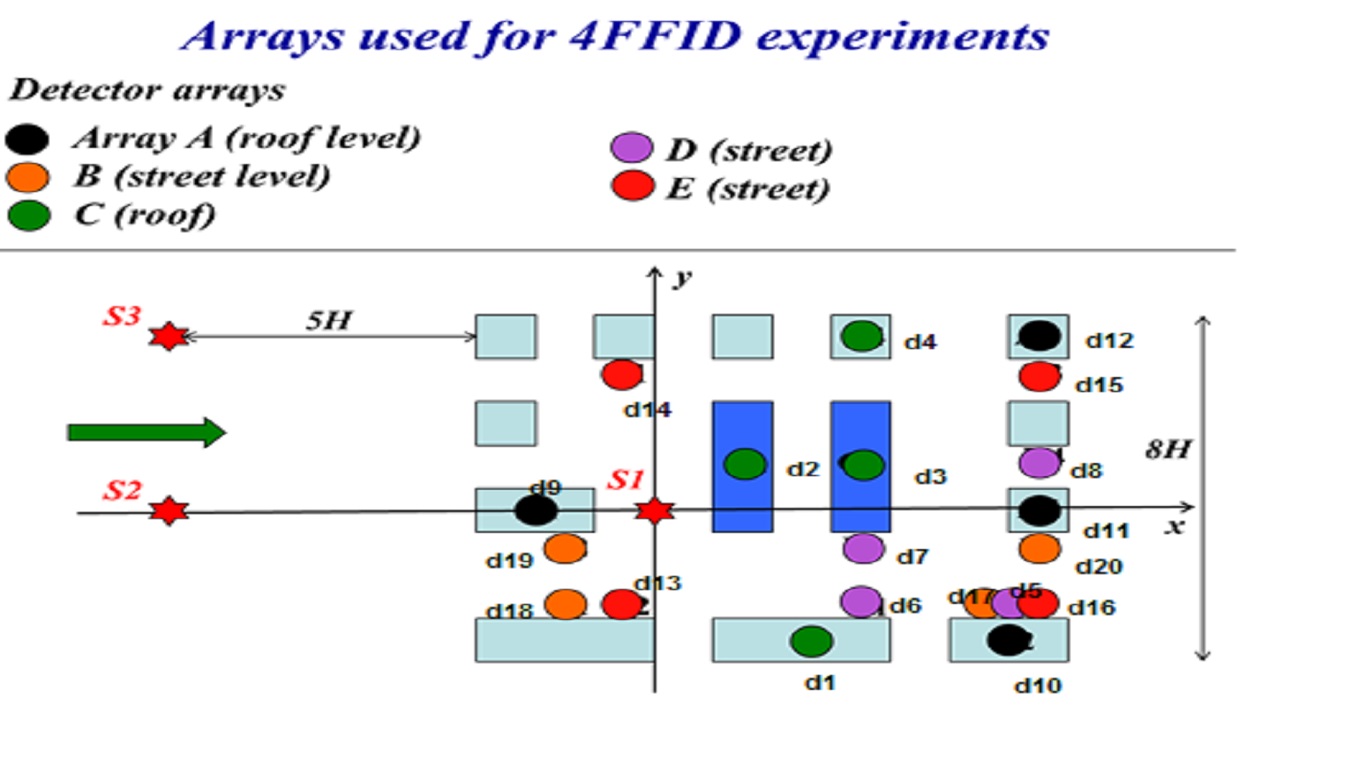}
\caption{Configuration of the complex
array. In each scenario, denoted case A through to E, 4 synchronized
detectors were used to measure the concentration of the released gas. Note
that some detectors are located at the roof of the buildings. Two different
source locations were used, denoted S1 and S2. (There is a third source S3
indicated in the figure, but it was not used for inverse modelling).}
\label{fig:ComplexArray}
\end{figure}

Two different source locations were used S1 and S2. The diameter of the
sources is 0.1m. The source S1 is defined to be at the origin (see Figure %
\ref{fig:ComplexArray}) while S2 is located upwind at x=-8H=-0.88m and y=0.
S1 and S2 are both located on the ground. As for the empty and complex array
the source strength is 50l/min=8.33e-4m$^{3}$/s.

\subsection{Source-sensor relationship: adjoint CFD-plumes}

To apply the bilevel optimisation method to the sensor data retrieved from
the wind tunnel we need a relationship between the source and the sensors,
i.e. a dispersion model. Since we are in a setting where a) we are
considering an urban environment and b) we are trying to gauge the fidelity
of the inverse solver, we opt to use a computational fluid dynamics solver
for the wind field. We are only considering the steady state of the
dispersion scenario (not the transient behaviour when the source is turned
on and subsequently off), hence a Reynolds-averaged Navier-Stokes (RANS)
solver is suitable. See \cite{BurkhartBurman} for RANS solutions of MODITIC
scenarios. The RANS solution in each case defines a source-sensor
relationship which could be used to solve the inverse problem, however, due
to computational efficiency it is preferable to solve the adjoint dispersion
problem \cite{Marchuk1986}. The dispersion problem is self-adjoint, hence
the adjoint solution is obtained by reversing the advection while leaving
the diffusive component untouched. The RANS solver in PHEONICS was adapted
accordingly to yield adjoint RANS solutions for the simple array and the
complex array \cite{WP7000}. One adjoint wind field was computed for each
sensor. These adjoint flow fields constitute a source-sensor relationship.

\subsection{Source reconstruction}

We will briefly outline the bilevel optimisation algorithm before using it
to reconstruct the MODITIC sources.

\subsubsection{Bilevel optimisation algorithm}

In each case four synchronous sensors where used, let us number them $%
i=1,2,3,4$. For each sensor $i$ compute the adjoint dispersion plume $\chi
_{i}(x,y,z)$ for each grid point $(x,y,z)$. Using the adjoint plume the
model sensor data is given by $c_{i}=q\chi _{i}(x,y,z)$ where $q$ is the
source strength of a point source located at grid point $(x,y,z)$. We denote
the sensor data obtained in the wind tunnel $d_{i}$.

\begin{enumerate}
\item Choose the weight $\sigma $ in the distance function. Choose detection
threshold $\delta $ (may be known from the sensor manufacturer). Choose
penalty coefficient $\lambda $.

\item Follower problem: Solve for the optimal emission rate $q^{\ast
}(x,y,z) $ in each grid point $(x,y,z)$ by%
\begin{equation*}
\min_{q}\sum_{i=1}^{4}\frac{\left( d_{i}-c_{i}^{{}}\right) ^{2}}{\sigma
^{2}\left( c_{i}^{{}}\right) }.
\end{equation*}

\item Compute the envelope grid function $c_{i}^{\ast }=q^{\ast }(x,y,z)\chi
_{i}(x,y,z)$ for each $(x,y,z)$ in the grid.

\item Leader problem: Minimize the grid function%
\begin{equation*}
V(x,y,z)=\min_{(x,y,z)}\left( \sum_{i=1}^{4}\frac{\left( d_{i}-c_{i}^{\ast
}\right) ^{2}}{\sigma ^{2}\left( c_{i}^{\ast }\right) }+\lambda q^{\ast
}(x,y,z)\right) ,
\end{equation*}%
,where $\lambda q^{\ast }$ is a penalty term, to find the optimal source
location $(x,y,z)=\left( x^{\ast },y^{\ast },z^{\ast }\right) .$

\item The optimal solution is given by $\left( q^{\ast },x^{\ast },y^{\ast
},z^{\ast }\right) $.
\end{enumerate}

\subsubsection{Results}

Below we present the source reconstruction results by showing the estimation
error in the $x,y$ and $z$ directions (source location) and in the emission
rate respectively. As the estimated emission rate varies over a large range
we present the error in the $\log _{10}$ value of the emission rate. We
consider four different choices of weight $\sigma $ in the distance
function: smooth threshold, piecewise linear, linear (as defined in equation
(\ref{eqn:thesigmas})) and in addition a constant weight $\sigma \equiv 1$.
The error for each parameter for each choice of $\sigma $ for each data set
in the simple and complex arrays are presented in scatter plots for
comparison. These results are presented for two different choices of high
emission rate penalty factor $\lambda $, $\lambda =10^{-2}$ and $\lambda =1$%
. In addition the results are presented for two different flavours of the
inverse problem: the "unconstrained" one where the source could be located
anywhere on or above the ground, and the "conditional" case where the source
is assumed (correctly!) to be on the ground.

To be specific, $\sigma \equiv 1$ in the constant case. In the other cases,
a scaled $\sigma $ is used: $\sigma _{\delta }\left( \mu \right) =\delta
\sigma \left( \mu /\delta \right) $. In the probabilistic interpretation, $%
\delta $ may be thought of as a detection limit. When the model value $\mu $
is below the detection limit, the standard deviation of the measured value $%
z $ is exactly $\delta $ in the piecewise linear case, and approximately $%
\delta $ in the smooth case. We use $\delta =10^{-6}$ in this study (a
choice of $\ \delta =10^{-3}$ gives no noticeable difference).

\subsection{Results for $\protect\lambda =10^{-2}$, $\protect\delta =10^{-6}$%
}

Setting $\lambda =10^{-2}$ and $\delta =10^{-6}$ yield the following results.

\subsubsection{3D estimates}


\begin{figure}[H]
\centering
\includegraphics[scale=0.15]{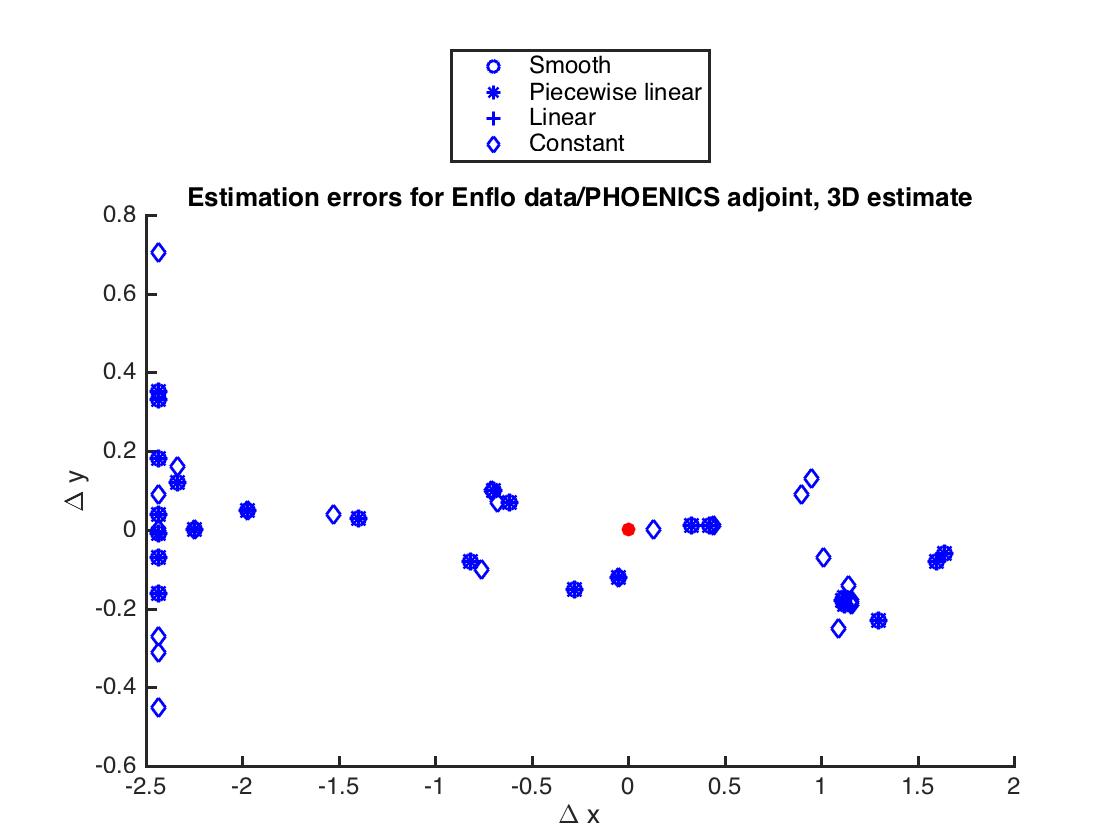}
\includegraphics[scale=0.15]{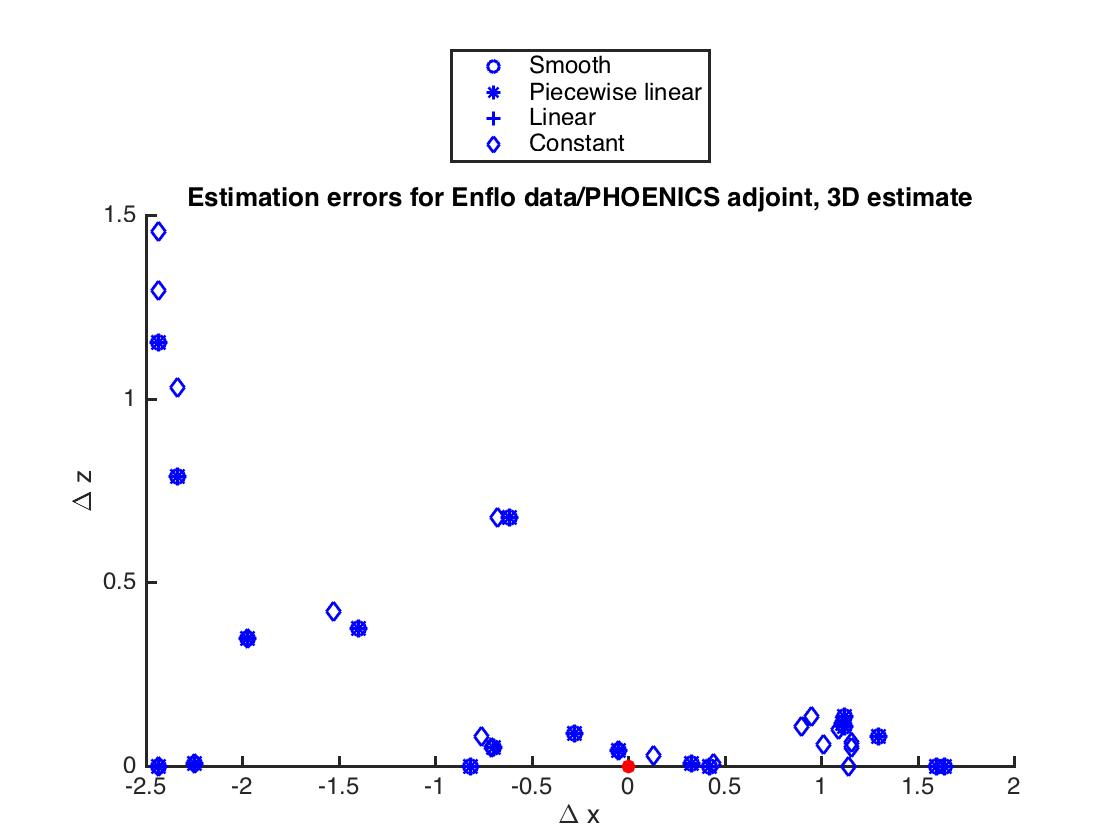}
\includegraphics[scale=0.15]{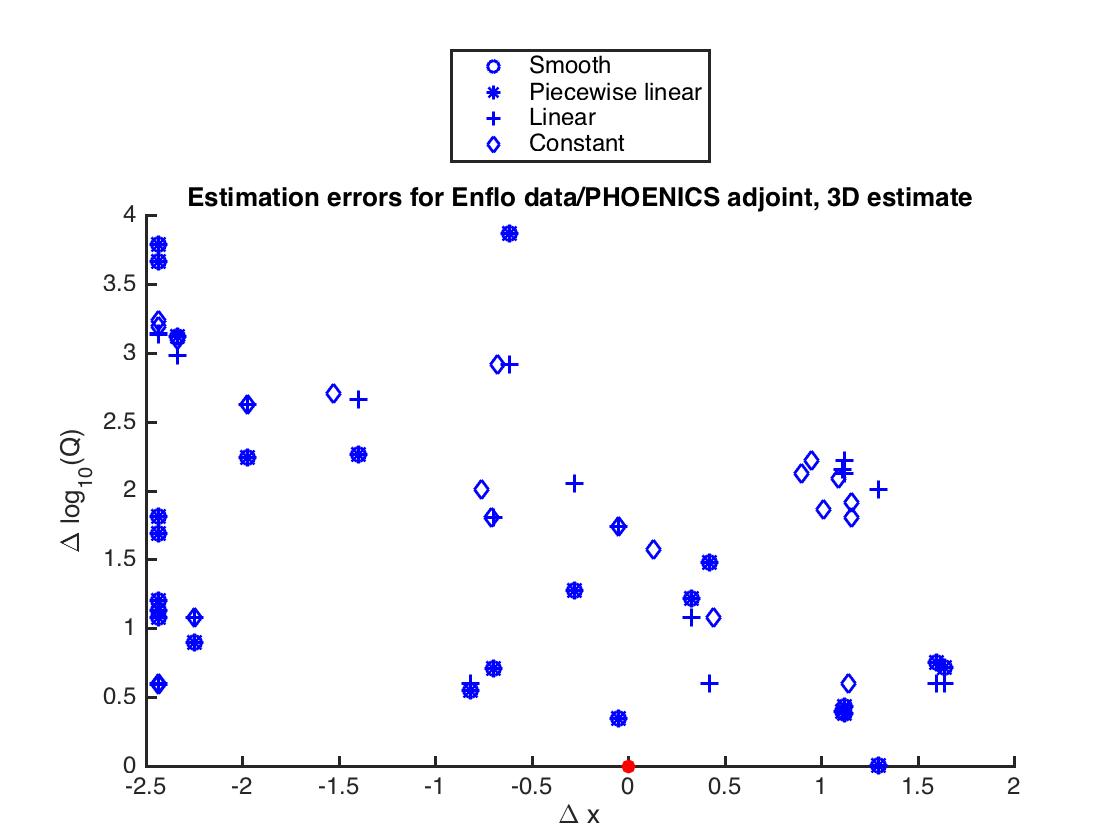}
\end{figure}


The average errors (averaged over all data sets) is presented in the
following table:

\begin{tabular}{lllll}
& Smooth & Piecewise linear & Linear & Constant \\ 
$x$ & $1.5$ & $1.5$ & $1.5$ & $1.42$ \\ 
$y$ & $0.117$ & $0.117$ & $0.117$ & $0.155$ \\ 
$z$ & $0.214$ & $0.214$ & $0.117$ & $0.251$ \\ 
$\log _{10}\left( q\right) $ & $1.46$ & $1.46$ & $1.63$ & $1.83$%
\end{tabular}

\subsubsection{2D estimates - assuming the source is on the ground}


\begin{figure}[H]
\centering
\includegraphics[scale=0.15]{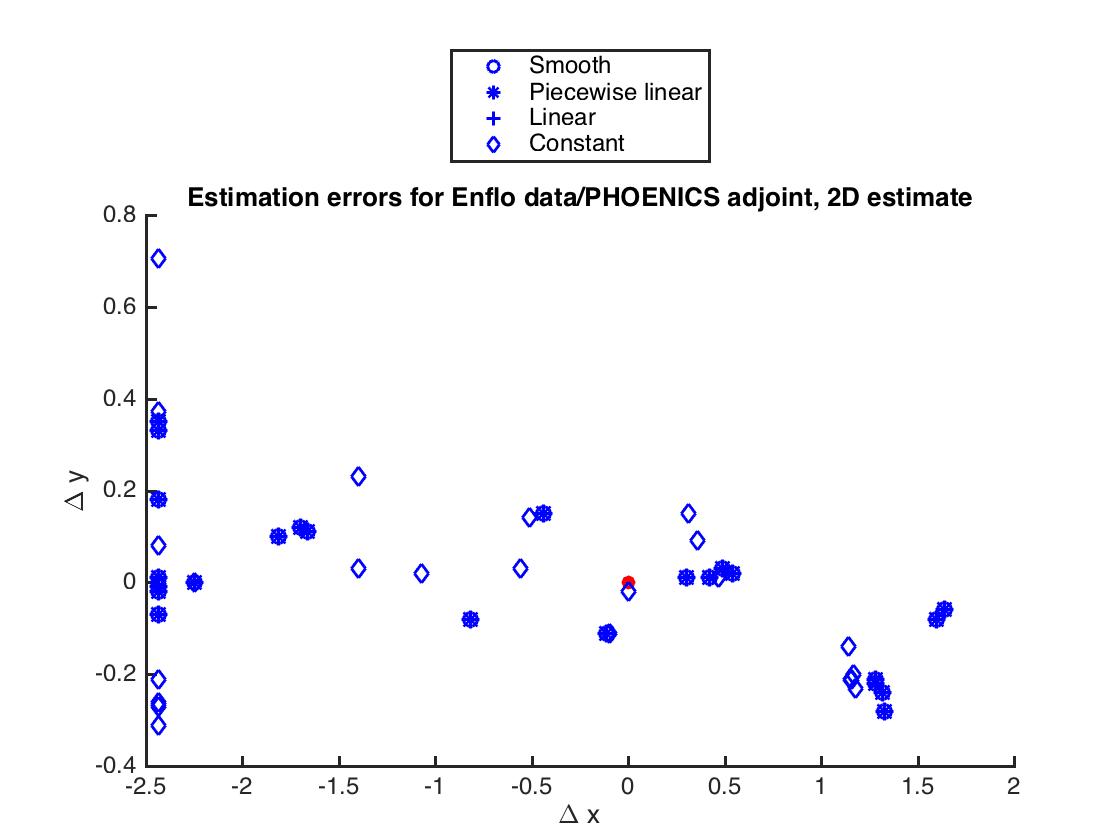}
\includegraphics[scale=0.15]{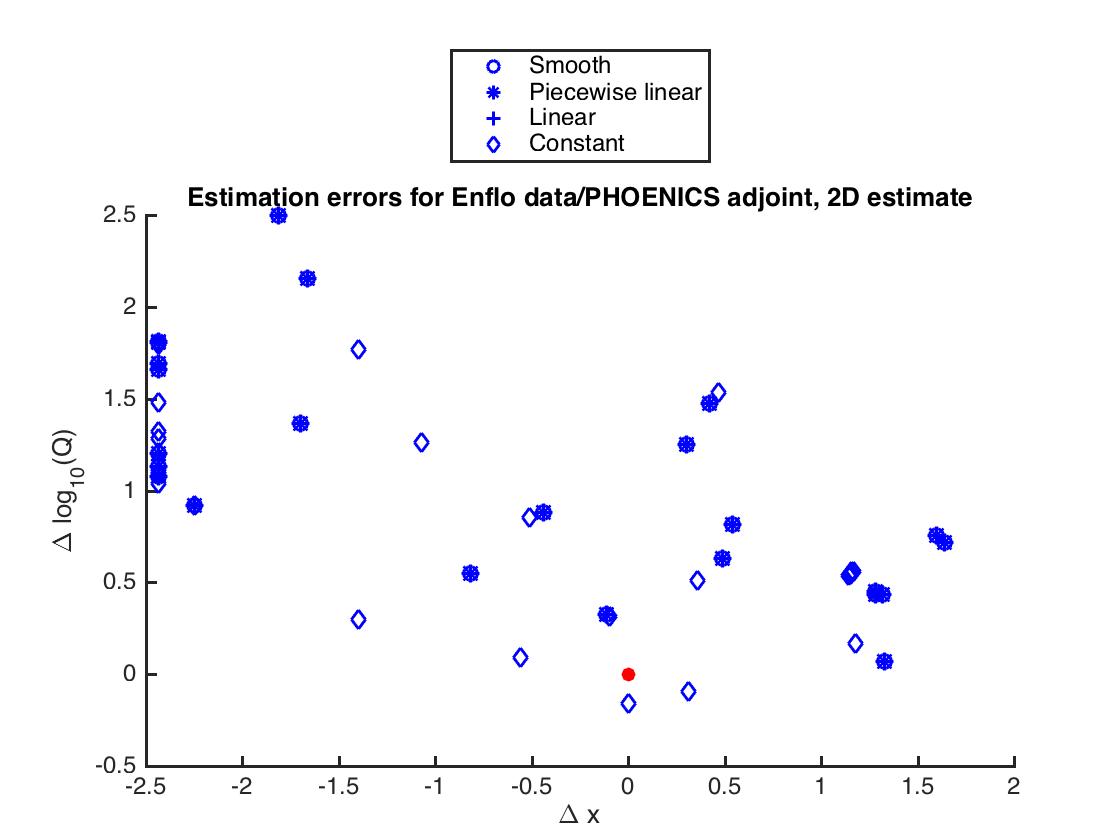}
\end{figure}

The average errors (averaged over all data sets) is presented in the
following table:

\begin{tabular}{lllll}
& Smooth & Piecewise linear & Linear & Constant \\ 
$x$ & $1.5$ & $1.5$ & $1.5$ & $1.41$ \\ 
$y$ & $0.117$ & $0.117$ & $0.117$ & $0.168$ \\ 
$\log _{10}\left( q\right) $ & $1.09$ & $1.09$ & $1.09$ & $0.853$%
\end{tabular}

\subsection{Results for $\protect\lambda =1$, $\protect\delta =10^{-6}$}

Increasing the penalty for large emission rates to $\lambda =1$, keeping $%
\delta =10^{-6}$, reduces the errors.

\subsubsection{3D estimates}

\begin{figure}[H]
\centering
\includegraphics[scale=0.15]{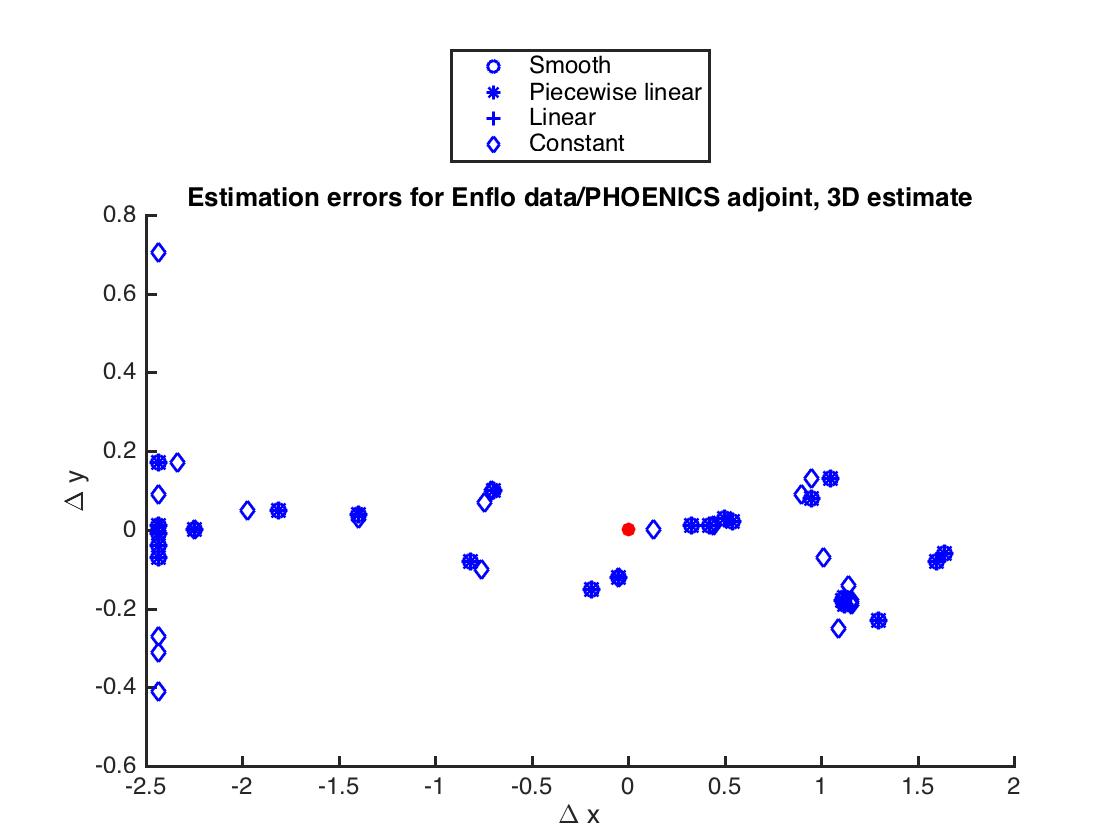}
\includegraphics[scale=0.15]{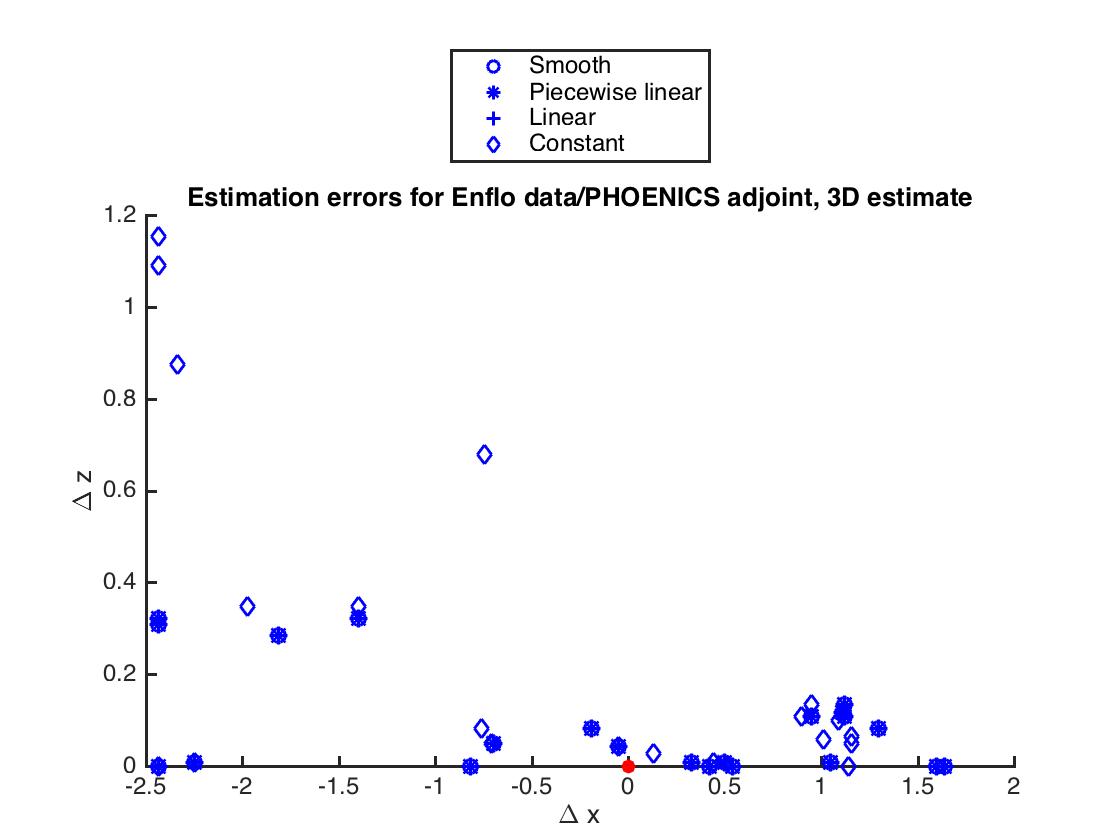}
\includegraphics[scale=0.15]{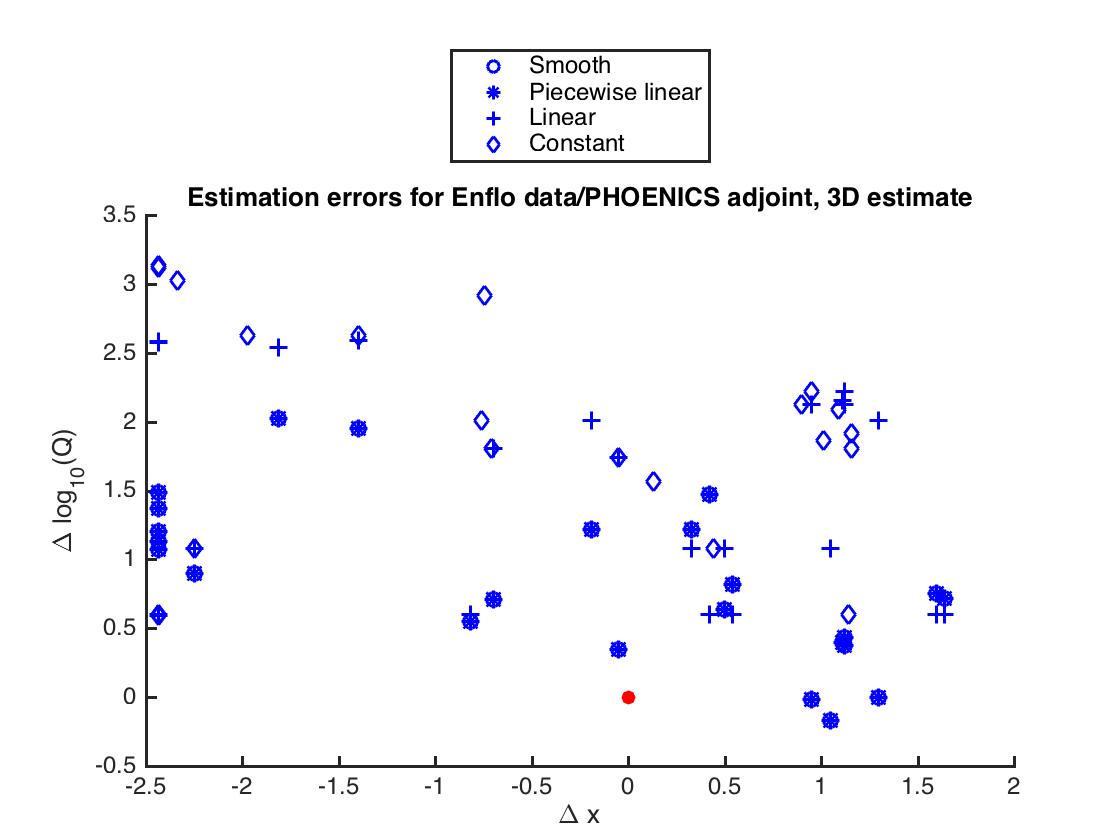}
\end{figure}

The average errors (averaged over all data sets) is presented in the
following table:

\begin{tabular}{lllll}
& Smooth & Piecewise linear & Linear & Constant \\ 
$x$ & $1.29$ & $1.29$ & $1.29$ & $1.42$ \\ 
$y$ & $0.0849$ & $0.0849$ & $0.0849$ & $0.22$ \\ 
$z$ & $0.0825$ & $0.0825$ & $0.0849$ & $0.22$ \\ 
$\log _{10}\left( q\right) $ & $0.873$ & $0.873$ & $1.48$ & $1.82$%
\end{tabular}

\subsubsection{2D estimates - assuming the source is on the ground}

\begin{figure}[H]
\centering
\includegraphics[scale=0.15]{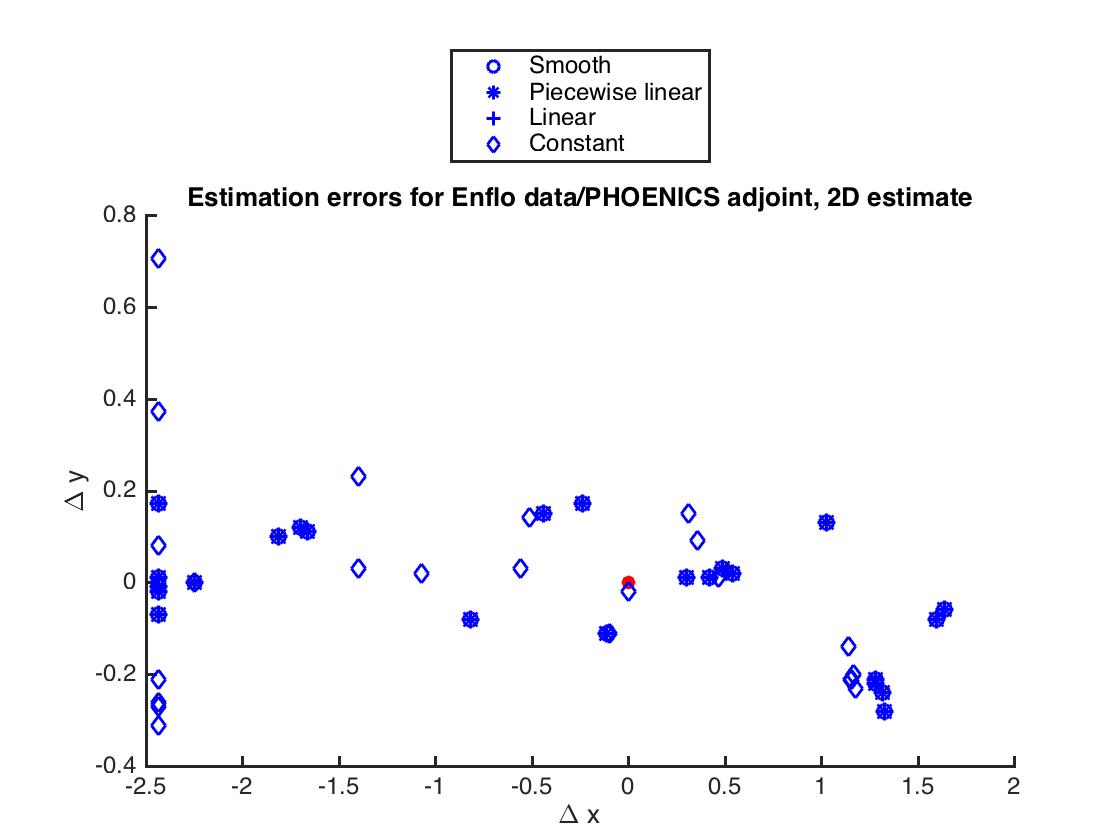}
\includegraphics[scale=0.15]{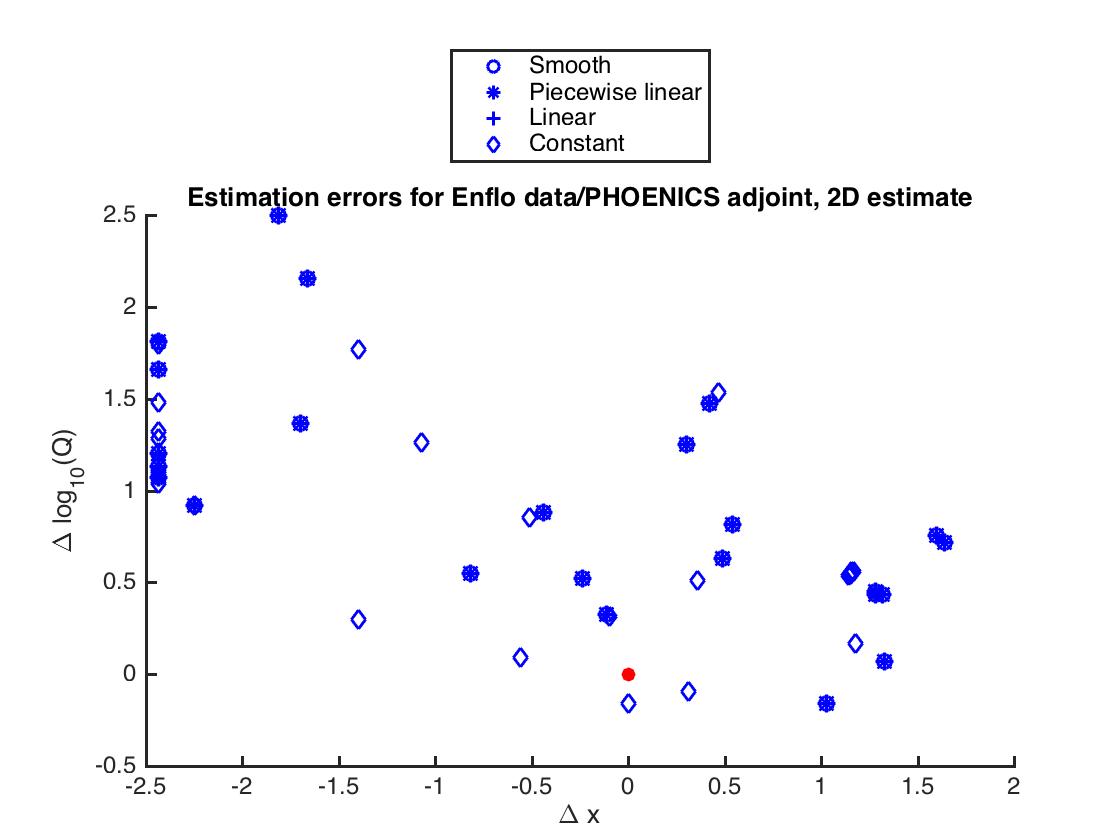}
\end{figure}

The average errors (averaged over all data sets) is presented in the
following table:

\begin{tabular}{lllll}
& Smooth & Piecewise linear & Linear & Constant \\ 
$x$ & $1.35$ & $1.35$ & $1.35$ & $1.41$ \\ 
$y$ & $0.101$ & $0.101$ & $0.101$ & $0.168$ \\ 
$\log _{10}\left( q\right) $ & $0.969$ & $0.969$ & $0.969$ & $0.853$%
\end{tabular}

\section{Discussion and Conclusions}

The bilevel optimisation method is well suited to linear inverse atmospheric
dispersion problems where a single source is emitting at a constant rate:
the problem decouples in a manner where the optimal source strength in each
grid point can be solved for first (follower problem) and then the optimal
location of the source can be determined (leader problem). When gauging how
good a candidate source is, namely how near its candidate sensor readings
are to the observed sensor readings - indeed how near its averaged candidate
sensor readings are to the averaged observed sensor readings, a distance
function is required. The distance function can be weighted, and since
averaged sensor readings are compared a natural choice of weight involves
the variance of the measurements. The simple choice is the let the weight
vary linearly with the model sensor reading. As the model sensor reading $%
\mu $ drops below the detection threshold however it is desired that the
variance of the real measurement $z$ becomes independent of the model sensor
reading $\mu $. The piecewise linear weight remedies this, but it comes at
the price of introducing a non-smoothness. We propose that this shortcoming
is addressed by introducing a smooth threshold weight keeping the benefit of
the piecewise linear weight (giving a variance with a strictly positive
lower bound). As shown in Figure \ref{fig:qualitative_difference} the
non-smoothness of the piecewise linear weight can yield a minimization
problem which is qualitatively different from smooth threshold distance
function and the linear one.

The method has been applied to experimental data from wind tunnel studies of
built up environments. Comparing average errors in the estimated parameters,
the smooth threshold distance function performs as well as the piecewise
linear one and usually better than the linear one. We also note that the
unweighted (constant) distance function is outperforming the others when it
comes to estimating the emission rate in the 2D case where it is a priori
assumed that the source is located on the ground. Increasing the penalty
coefficient $\lambda ,$penalising high emission rates, from $10^{-2}$ to $1$
decreases the average errors. It would be interesting to study how $\lambda $
should be chosen optimally.

\begin{acknowledgement}
This work was conducted within the European Defence Agency (EDA) project
B-1097-ESM4-GP \textquotedblleft Modelling the dispersion of toxic
industrial chemicals in urban environments\textquotedblright\ (MODITIC).
\end{acknowledgement}


\begin{thebibliography}{99}
\bibitem{Tikhonov1963} Tikhonov A. N. 1963 Translated in "Solution of
incorrectly formulated problems and the regularization method". \textit{%
Soviet Mathematics,} 4, 1035--1038.

\bibitem{Enting2002} Enting I G 2002 \textit{Inverse Problems in Atmospheric
Constituent Transport} Cambridge Univ Press, Cambridge

\bibitem{Marchuk1986} Marchuk G I 1986 Mathematical models in environmental
problems \textit{Studies in mathematics and its applications} \textbf{16}

\bibitem{GHL} Gudiksen P H, Harvey T F and Lange R 1989 Chernobyl source
term, atmospheric dispersion, and dose estimation \textit{Health Physics} 
\textbf{57} 5 697--706

\bibitem{StohlEtAl} Stohl A, Seibert P, Wotawa G, Arnold D, Burkhart J F,
Eckhardt S, Tapia C, Vargas A and Yasunari T J 2012 Xenon-133 and
caesium-137 releases into the atmosphere from the Fukushima Dai-ichi nuclear
power plant: determination of the source term, atmospheric dispersion, and
deposition \textit{Atmos. Chem. Phys.} \textbf{12} 2313--2343

\bibitem{RingbomEtAl} Ringbom A, Axelsson A, Aldener M, Auer M, Bowyer T W,
Fritioff T, Hoffman I, Khrustalev K, Nikkinen M, Popov Y, Ungar K and Wotawa
G 2014 Radioxenon detections in the VTBT international monitoring system
likely related to the announced nuclear test in North Korea on Febuary 12,
2013 \textit{Journal of Environmental Radioactivity} \textbf{128} 47--63

\bibitem{TheysEtAl} N. Theys, R. Campion, L. Clarisse, H. Brenot, J. van
Gent, B. Dils, S. Corradini, L. Merucci, P.-F. Coheur, M. Van Roozendael, D.
Hurtmans, C. Clerbaux, S. Tait, and F. Ferrucci 2013 Volcanic SO2 fluxes
derived from satellite data: a survey using OMI, GOME-2, IASI and MODIS 
\textit{Atmos. Chem. Phys.}, 13, 5945--5968

\bibitem{StohlEtAl2010} A. Stohl, A. J. Prata, S. Eckhardt, L. Clarisse, A.
Durant, S. Henne, N. I. Kristiansen, A. Minikin, U. Schumann, P. Seibert, K.
Stebel, H. E. Thomas, T. Thorsteinsson, K. T\o rseth, and B. Weinzierl \
2011 Determination of time- and height-resolved volcanic ash emissions and
their use for quantitative ash dispersion modeling: the 2010 Eyjafjallaj\"{o}%
kull eruption, \textit{Atmos. Chem. Phys.}, 11, 4333--4351

\bibitem{GrahnEtAl} Grahn H., von Schoenberg P., Br\"{a}nnstr\"{o}m N. 2015
What's that Smell? Hydrogen sulphide transport from Bardarbunga to
Scandinavia \textit{Journal of Volcanology and Geothermal Research}, 303,
187--192

\bibitem{Stuart2010} Stuart A M 2010 Inverse problems: a Bayesian
perspective \textit{Acta Numerica} \textbf{19}

\bibitem{KYL2007} Keats A, Yee E and Lien F-S 2007 Bayesian inference for
source determination with applications to a complex urban environment 
\textit{Atmospheric Environment} \textbf{41} 465--479

\bibitem{Yee2007} Yee E 2007 \textit{Bayesian Inversion of Concentration
Data for an Unknown Number of Contaminant Sources} Technical Report DRDC
Suffield TR 2007-085

\bibitem{YF2010} Yee E and Flesch T K 2010 Inference of emission rates from
multiple sources using Bayesian probability theory \textit{Journal of
Environmental Monitoring} \textbf{12} 622--634

\bibitem{Yee2012} Yee E 2012 Probability Theory as Logic: Data Assimilation
for Multiple Source Reconstruction \textit{Pure and Applied Geophysics} 
\textbf{169} 499--517

\bibitem{Yee2012B} Yee E 2012 Inverse Dispersion for an Unknown Number of
Sources: Model Selection and Uncertainty Analysis \textit{ISRN Applied
Mathematics} \textbf{2012}

\bibitem{BP2015} Br\"{a}nnstr\"{o}m N and Persson L \AA\ 2015 A measure
theoretic approach to linear inverse atmospheric dispersion problems \textit{%
Inverse Problems, }\textbf{31}, 2.

\bibitem{RL1998} Robertson L and Langner J 1998 Source function estimate by
means of variational data assimilation applied to the ETEX-I tracer
experiment \textit{Atmospheric Environment} \textbf{32} 24 4219--4225

\bibitem{THG2007} Thomson L C, Hirst B, Gibson G, Gillespie S, Jonathan P,
Skeldon K D and Padget M J 2007 An improved algorithm for locating a gas
source using inverse methods \textit{Atmospheric Environment} \textbf{41} 6
1128--1134

\bibitem{AYH2007} Allen C T, Young G S and Haupt S E 2007 Improving
pollutant source characterization by better estimating wind direction with a
genetic algorithm \textit{Atmospheric Environment} \textbf{41} 11 2283--2289

\bibitem{ISS2012} Issartel J P, Sharan M and Singh S K 2012 Indentification
of a point source by use of optimal weighted least squares \textit{Pure and
Applied Geophysics} \textbf{169} 467--482

\bibitem{SSI2012} Sharan M, Singh S K and Issartel J P 2012 Least square
data assimilation for identification of the point source emissions \textit{%
Pure and Applied Geophysics} \textbf{169} 483--497

\bibitem{Bocquet2005} Bocquet M 2005 Reconstruction of an atmospheric tracer
source using the principle of maximum entropy I: theory \textit{Quarterly
Journal of the Royal Meteorological Society} \textbf{131} 610B 2191--2208

\bibitem{Bard1998} Jonathan F. Bard:\ Practical Bilevel Optimization -
Algorithms and Applications. Springer, 1998.

\bibitem{MilgromSegal2002} Paul Milgrom and Ilya Segal:\ Envelope theorems
for arbitrary choice sets. Econometrica, Vol. 70, No. 2, March 2002, p.
583-601.

\bibitem{NocedalWright2006} Nocedal J and Wright S J 2006 \textit{Numerical
Optimization}, Second Edition, Springer

\bibitem{RobinsEtAl2016a} Robins, A., M. Carpentieri, P. Hayden, J. Batten,
J. Benson and A. Nunn, 2016: MODITIC WIND TUNNEL EXPERIMENTS, FFI Report (in
preparation). see also Harmo17 extended abstract with the same title.

\bibitem{RobinsEtAl2016b} Robins, A., M. Carpentieri, P. Hayden, J. Batten,
J. Benson and A. Nunn, 2016: MODITIC WIND TUNNEL EXPERIMENTS, Conference
proceedings of the 17th International Conference on Harmonisation within
Atmospheric Dispersion Modelling for Regulatory Purposes (HARMO17).

\bibitem{BurkhartBurman} S. Burkhart and J. Burman, 2016, MODITIC WIND
TUNNEL EXPERIMENTS NEUTRAL AND HEAVY GAS SIMULATION USING RANS, Conference
proceedings of the 17th International Conference on Harmonisation within
Atmospheric Dispersion Modelling for Regulatory Purposes (HARMO17)

\bibitem{WP7000} N. Br\"{a}nnstr\"{o}m, S. Burkhart, J. Burman, X. Busch,
J.-P. Issartel, L. \AA . Persson, 2016, MODITIC WP7000 Backtracking - Linear
inverse modelling of neutral gases in urban environments, FOI Report (in
preparation).
\end{thebibliography}
\end{document}